\newcommand{\cf}{cf.}
\newcommand{\wrt}{w.r.t.}
\begin{document}

\newtheorem{example}{Example}
\newtheorem{definition}{Definition}
\newtheorem{lemma}{Lemma}
\newtheorem{theorem}{Theorem}

\title{Termination of Rewriting with and Automated Synthesis of Forbidden Patterns}

\author{Bernhard Gramlich\\
TU Vienna,
\url{gramlich@logic.at}\\
\and
Felix Schernhammer\thanks{The author has been supported by the Austrian Academy of Sciences under grant 22.361.}\\
TU Vienna, 
\url{felixs@logic.at}\\
}
\def\titlerunning{Forbidden Patterns: Termination and Automated Synthesis}
\def\authorrunning{B.\ Gramlich \& F.\ Schernhammer}

\maketitle

\begin{abstract}
We introduce a modified version of the well-known dependency pair
framework that is suitable for the termination analysis of rewriting
under forbidden pattern restrictions. By attaching contexts to dependency
pairs that represent the calling contexts of the corresponding
recursive function calls, it is possible to incorporate the
forbidden pattern restrictions in the (adapted) notion of dependency
pair chains, thus yielding a sound and complete approach to termination
analysis.
Building upon this contextual dependency pair framework we introduce
a dependency pair processor that simplifies problems by analyzing the contextual
information of the dependency pairs. Moreover, we show how this processor
can be used to synthesize forbidden patterns suitable for a given 
term rewriting system
on-the-fly during the termination analysis.
\end{abstract}

\section{Introduction and Overview}

Rewriting with \emph{forbidden patterns} \cite{wrs09} is a proper
restriction of term rewriting where subterms of terms may be forbidden
for reduction 
whenever they appear in a certain context and have a certain shape.
The main goal of rewriting with restrictions is to allow reductions
that are essential for computing results (i.e., normal forms) and to
disallow reductions that are not needed and may lead to 
infinite computations.

In \cite{wrs09} first criteria for completeness and termination of
rewriting 
with forbidden patterns were introduced. Here, by \emph{completeness}
we 
mean the power of restricted rewriting to compute useful results, which
in \cite{wrs09} were head-normal forms. The termination criterion
of \cite{wrs09} is based on a transformation from rewrite systems
with forbidden patterns into ordinary TRSs such that termination of
both coincides. 

In this work we provide another direct termination proof approach
based on a \emph{contextual} extension of the well-known dependency
pairs (DP) approach of \cite{ag00}, \cf\ also 
\cite{fsttcs06,lpar08,ic10-alarcon-gutierrez-lucas}. The basic idea is to enrich dependency pairs
by an additional component. This component is the calling context corresponding
to the recursive function call the dependency pair originated from.
Hence, the full contextual information is incorporated into
the dependency pairs and can be used to define an adequate notion
of chain respecting the restrictions imposed by forbidden patterns.

Building upon this generalized notion of dependency pair chains we
propose a DP processor that may simplify DP problems by analyzing
the contexts attached to dependency pairs. The processor
analyzes sequences of dependency pairs for being (potential)
DP chains, by checking whether the necessary reduction steps
are allowed in the respective contexts according to the forbidden
pattern restrictions. If it finds that a certain DP cannot occur
in any DP chain then this pair can safely be deleted from the DP
problem in question.

Our new approach is applicable to 
a wider class of rewrite systems with forbidden patterns than the
transformational approach of \cite{wrs09}. In addition it turns
out that as a byproduct of termination analysis in our framework
we get a method of synthesizing forbidden patterns suitable for
a given rewrite system on-the-fly and fully automatically. 

In order to evaluate our approach we used our method to analyze
the TRSs in the outermost category of the TPDB for outermost
termination. This makes sense, as outermost rewriting is a special
case of rewriting with forbidden patterns. Hence, our methods are
applicable. With the methods described in this work, the results
are already promising and better than some transformational approaches.
However, the potential of the contextual dependency pair approach
seems even larger given the results of the experiments of \cite{wst10}
where a more sophisticated DP processor was used for analyzing contexts
of dependency pairs. There the number of positive termination proofs
could almost be doubled. 

\section{Preliminaries} 
\label{sec_termination}

We assume a basic knowledge of and familiarity with the notions and
notations of rewriting as can be found, for instance, in \cite{bn98}.

The set of positions of a term $t$ is denoted by $Pos(t)$. For a
signature $\mathcal{F}$ the set of function symbol positions of $t$ is
denoted by $Pos_\mathcal{F}(t)$ and for a subsignature $\mathcal{F}'$ of
$\mathcal{F}$ by $Pos_{\mathcal{F}'}(t)$ we denote those positions $q$ of $Pos_{\mathcal{F}}(t)$
where $root(t|_q) \in \mathcal{F}'$. In reduction steps we sometimes
specify information about where the step takes place, as e.g.\ in 
$s \overset{p}{\rightarrow} t$, 
$s \overset{\leq p}{\rightarrow} t$, or
$s \overset{\not\leq p}{\rightarrow} t$.
We say that a rewrite rule
$l \rightarrow r$ overlaps a (variable-disjoint) term $t$ at
non-variable position $p\in Pos_\mathcal{F}(t)$ if $l$ and 
$t|_p$ unify.

A \emph{forbidden pattern} is a triple $\langle t, p, \lambda
\rangle$, consisting of a term $t$, a position $p \in Pos(t)$ and a
flag $\lambda \in \{h, b, a\}$. 
Given a term $s$ and a forbidden pattern $\pi = \langle t, p, \lambda \rangle$, 
$t$ and $p$ determine a set of positions $P_{t, p}(s) \subseteq Pos(s)$ by
$q \in P_{t, p}(s) \Leftrightarrow s|_o = t \sigma \wedge q =
o.p$ for some substitution $\sigma$
and some position $o$.
Moreover, for $\pi = \langle t, p, \lambda \rangle$, 
$P_{\pi}(s) = \{o \in Pos(s) \mid \exists q \in P_{t, p}(s) \colon o < q\}$ if $\lambda = a$, 
$P_{\pi}(s) = \{o \in Pos(s) \mid \exists q \in P_{t, p}(s) \colon o > q\}$ if $\lambda = b$ and
$P_{\pi}(s) = P_{t, p}(s)$ if $\lambda = h$. Given a set of forbidden patterns $\Pi$,
the set of \emph{forbidden} positions $\overline{Pos}^{\Pi}(s)$ w.r.t.\ $\Pi$ of a term $s$ is 
$\bigcup_{\pi \in \Pi}P_{\pi}(s)$. 
The \emph{allowed} positions $Pos^{\Pi}$ of $s$ (\wrt\ $\Pi$) are $Pos(s)
  \setminus \overline{Pos}^{\Pi}(s)$.
Rewriting with forbidden patterns (we write $\rightarrow_{\mathcal{R}, \Pi}$, 
or just $\rightarrow_{\Pi}$ 
-- or even only $\Pi$ as in
$\Pi$-termination --
if $\mathcal{R}$ is clear from the context)
is rewriting at 
positions that are allowed (\wrt\ $\Pi$).

\begin{example}
\label{ex_2nd}
Consider the following rewrite system, cf.\ e.g.\ \cite{Luc01b}:
\begin{equation*}
 \begin{tabular}[b]{r@{ $\rightarrow$ }l@{\;\;\;\;\;\;\;\;\;\;\;\;}r@{ $\rightarrow$ }l}
    $\mathsf{inf}(x)$ & $x : \mathsf{inf}(s(x))$ & $\mathsf{2nd}(x : (y : zs))$ & $y$
  \end{tabular}
\end{equation*}
We use one forbidden pattern $\Pi = \{\langle x : (y : z), 2.2, h \rangle\}$.
Then the term $s = 0 : s(0) : \mathsf{inf}(s(s(0)))$ is a normal form \wrt\ 
rewriting with forbidden patterns, we also say it is a $\Pi$-normal form.
Here, $x : (y : z)$ matches $s$ and the only potential redex $\mathsf{inf}(s(s(0)))$ cannot
be reduced, as it occurs at the forbidden position $2.2$ in $s$.
\end{example}

\section{Contextual Dependency Pairs}
\label{cdp}

For our approach of termination analysis of rewriting with forbidden patterns
we restrict our attention to forbidden patterns with $b$- and $h$-flags.
For brevity we call these patterns $b$- and $h$-patterns.  

We base our approach
on the well-known dependency pair (DP) framework of \cite{jar06}, which is in turn
based on dependency pairs of \cite{ag00}. 
The central observation
of the (ordinary) dependency pair approach is that given a non-terminating
rewrite system $\mathcal{R}$, there exists an infinite
reduction sequence (starting w.l.o.g.\ with a root reduction step), 
such that no redex contracted in this sequence
contains a non-terminating proper subterm. 
Such reduction sequences roughly correspond to minimal dependency pair
chains whose existence or non-existence is analyzed in the DP framework. 
For rewriting with forbidden patterns the above
observation does not hold.
\begin{example}
\label{ex_inf}
Consider the following TRS $\mathcal{R}$
\begin{equation*}
 \begin{tabular}[b]{r@{ $\rightarrow$ }l@{\;\;\;\;\;\;\;\;\;\;\;\;}r@{ $\rightarrow$ }l}
    $a$ & $f(a)$ & $f(x)$ & $g(x)$
  \end{tabular}
\end{equation*}
and an associated set of forbidden patterns
$\Pi = \{\langle f(x), 1, h \rangle\}$.
$\mathcal{R}$ is not $\Pi$-terminating:
$a \rightarrow_{\Pi} f(a) \rightarrow_{\Pi} g(a) \rightarrow_{\Pi} g(f(a)) \rightarrow_{\Pi} 
g(g(a)) \rightarrow_{\Pi} \ldots$
Note that since position $1$ is forbidden in $f(a)$, we do not have
$f(a) \rightarrow_{\Pi} f(f(a))$. Obviously, every
non-$\Pi$-terminating 
term $s$ must contain exactly one $a$. After this $a$ is reduced,
the single $a$-symbol in the contracted term is forbidden (as it occurs
in the first argument of $f$). Hence, the redex of the following reduction must
properly contain $a$. 
\end{example}

In Example \ref{ex_inf}
reductions whose redexes properly
contain non-$\Pi$-ter\-mi\-nating terms are crucial for the existence of
infinite $\Pi$-derivations.
Hence, instead of ordinary non-termination we 
focus on
a restricted form of non-$\Pi$-termination, namely
non-$\Pi$-termination in a context.

\begin{definition}[\emph{termination in a context}]
\label{def_term_cont}
Let $\mathcal{R}$ be a TRS and $\Pi$ be a set
of forbidden patterns. A term $s$ is \emph{$\Pi$-terminating in
context} $C[\Box]_p$ if $C[s]_p$ does not admit an infinite
$\Pi$-reduction sequence where each redex contracted occurs at,
below or parallel to 
$p$ and where infinitely many 
steps are at or below
$p$.
\end{definition}
We omit explicit reference to the context if 
it is clear which one is meant. 
For instance
the term $s|_p$ is $\Pi$-terminating in its context means that $s|_p$ is $\Pi$-terminating
in the context $s[\Box]_p$.

We say a term $s$ is 
\emph{minimal}
non-$\Pi$-terminating 
in a context 
$C[\Box]_q$
(w.r.t.\
a rewrite system $\mathcal{R}$ and a set of forbidden patterns $\Pi$)
if $s$ is non-$\Pi$-terminating in 
$C[\Box]_q$
and 
every proper subterm $s|_p$ of $s$
is $\Pi$-terminating in $C[s[\Box]_p]_q$.
The following lemma provides some insight into the
shape of infinite $\Pi$-reduction sequences starting from
minimal non-terminating terms.
\begin{lemma}
\label{lem_min_nonterm}
Let $\mathcal{R}$ be a TRS and let $\Pi$ be a set of forbidden $h$- and
$b$-patterns. A term $s$ that is minimal non-$\Pi$-terminating in a context
$C[\Box]_q$ admits a reduction sequence
\begin{equation*}
C[s]_q \overset{\smash{\not \leq q}}{\rightarrow}_{\Pi}^* C'[s']_q = C'[l \sigma]_q 
\overset{q}{\rightarrow}_{\Pi} C'[r \sigma]_q = C'[t]_q
\end{equation*}
such that $t$ contains a subterm $t|_p$ that is minimal
non-$\Pi$-terminating in the context $C'[t[\Box]_p]_q$.
\end{lemma}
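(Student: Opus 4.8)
The plan is to work directly from the definition of non-$\Pi$-termination in a context. Since $s$ is minimal non-$\Pi$-terminating in $C[\Box]_q$, there is an infinite $\Pi$-reduction sequence $\mathcal{D}$ starting from $C[s]_q$ in which every contracted redex is at, below, or parallel to $q$, and infinitely many steps are at or below $q$. As no step of $\mathcal{D}$ lies strictly above $q$, the position $q$ persists forever along $\mathcal{D}$, so it makes sense to speak of ``the current subterm at $q$'' and ``the current context around $q$''; these are affected only by steps strictly below $q$, respectively only by steps parallel to $q$. Note also that $s$ is not a variable, since a variable in a context admits no step at or below the hole. I would then split into cases according to whether $\mathcal{D}$ ever contracts a redex \emph{exactly} at $q$.

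Assume it never does. Then all of the infinitely many steps that are at or below $q$ are actually strictly below $q$, hence strictly inside the current subterm at $q$; and since that subterm is never rewritten at its root, it retains the root symbol $f$ of $s$ and its arity $n$ throughout. Each such step therefore takes place at a position $\geq q.i$ for one of the finitely many $i \in \{1,\dots,n\}$, so by the pigeonhole principle there is a fixed $i$ for which infinitely many steps occur at positions $\geq q.i$. Now re-read the \emph{same} sequence $\mathcal{D}$, viewing its starting term $C[s]_q$ as $C[s[\Box]_i]_q$ with $s|_i$ in the hole: it is still an infinite $\Pi$-reduction sequence, all of its steps are at, below, or parallel to $q.i$, and infinitely many are at or below $q.i$. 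Hence the proper subterm $s|_i$ of $s$ is non-$\Pi$-terminating in the context $C[s[\Box]_i]_q$, contradicting the minimality of $s$; so this case is vacuous. I expect this to be the delicate point of the proof: it is where one must be careful with the bookkeeping of positions — that $q$ and $q.i$ remain stable, that the root of the subterm at $q$ is preserved because no step occurs at $q$, and that the very same reduction re-certifies as a witness against a different context.

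Hence $\mathcal{D}$ does contract a redex at $q$; let the $m$-th step be the first one that does. Each of the preceding $m-1$ steps is at a position that is neither strictly above $q$ nor equal to $q$, i.e.\ strictly below $q$ or parallel to $q$ — that is, at a position $\not\leq q$. Disentangling these two independent threads (a step strictly below $q$ only changes the subterm at $q$, a step parallel to $q$ only changes the surrounding context) rewrites $C[s]_q$ by steps at positions $\not\leq q$ to $C'[s']_q$, where $s'$ is the subterm at $q$ immediately before the $m$-th step and $C'$ the context there. Since the $m$-th step rewrites at $q$, necessarily $s' = l\sigma$ for some rule $l \to r \in \mathcal{R}$ and substitution $\sigma$, and that step is $C'[l\sigma]_q \to_\Pi C'[r\sigma]_q$. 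Setting $t = r\sigma$ gives precisely the reduction sequence claimed.

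Finally I would extract a minimal non-$\Pi$-terminating subterm inside $t$. The tail of $\mathcal{D}$ beginning at $C'[t]_q$ is infinite, uses only steps at positions not strictly above $q$, and still has infinitely many steps at or below $q$ (only finitely many were consumed before the $m$-th step); thus $t$ is non-$\Pi$-terminating in the context $C'[\Box]_q$, which is $C'[t[\Box]_\epsilon]_q$. If $t$ is already minimal there, take $p = \epsilon$. Otherwise, a term that is non-$\Pi$-terminating in a context but not minimal must, by definition, contain a \emph{proper} subterm that is non-$\Pi$-terminating in the induced context, and that induced context is again of the form $C'[t[\Box]_{p'}]_q$ with $p' > \epsilon$; iterating this descent yields a strictly increasing chain of positions, so by finiteness of $t$ it terminates at a position $p$ with $t|_p$ minimal non-$\Pi$-terminating in $C'[t[\Box]_p]_q$ — the required subterm.
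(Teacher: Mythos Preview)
Your argument is correct and follows essentially the same route as the paper: fix an infinite witnessing $\Pi$-reduction, use minimality and a pigeonhole argument on the immediate subterms of $s$ to force a step at $q$, take the first such step to obtain $C'[r\sigma]_q$, and then descend inside $t = r\sigma$ to a minimal non-$\Pi$-terminating subterm. Your write-up is simply more explicit than the paper's (you spell out why the root of the subterm at $q$ is preserved, why the same sequence $\mathcal{D}$ witnesses non-$\Pi$-termination of $s|_i$ in its induced context, and why the descent to a minimal subterm terminates), but the underlying ideas are identical.
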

\begin{proof}
As $s$ is non-$\Pi$-terminating in $C[\Box]_q$ there is an infinite
$\Pi$-reduction sequence starting from $C[s]_q$ such that all
reduction steps are at, below or parallel to $q$ and infinitely many
reduction steps are at or below $q$ (according to Definition
\ref{def_term_cont}). Since $s$ is minimally non-$\Pi$-terminating in
$C[\Box]_q$, eventually there must be a step at position $q$ in this
reduction sequence.
Otherwise, by the pigeonhole principle infinitely many reduction steps
would occur at or below a proper subterm of $s$ contradicting
termination 
of this subterm in its context.
Hence, we have
\begin{equation*}
C[s]_q \overset{\smash{\not \leq q}}{\rightarrow}_{\Pi}^* C'[s']_q = C'[l \sigma]_q 
\overset{q}{\rightarrow}_{\Pi} C'[r \sigma]_q = C'[t]_q
\end{equation*}
as part of our infinite $\Pi$-reduction sequence. Because of infinity
of the reduction sequence $t$ must be non-$\Pi$-terminating in
$C'[\Box]_q$. However, every term $t$ that is non-$\Pi$-terminating in
a context $C'[\Box]_q$ has a subterm $t|_p$ that is minimally
non-$\Pi$-terminating in $C'[t[\Box]_p]_q$. 
\end{proof}
Note that in contrast to
ordinary rewriting and standard minimal non-terminating terms
one can in general not assume that $p \in Pos_{\mathcal{F}}(r)$
(this effect similarly
exists in context-sensitive rewriting, cf.\ 
\cite{fsttcs06,lpar08,ic10-alarcon-gutierrez-lucas}).

\begin{example}
Consider $\mathcal{R}$ and $\Pi$ of Example \ref{ex_inf} and 
the term $f(a)$ which is minimally non-$\Pi$-terminating (in the empty context), since
position $1$ is forbidden in $f(a)$ according to $\Pi$. Now consider the reduction
$f(a) = f(x) \sigma \overset{\epsilon}{\rightarrow}_{\Pi} g(a) = g(x) \sigma$
($x \sigma = a$). The term $g(a)$ contains only one proper minimal
non-$\Pi$-terminating subterm $g(a)|_1 = a$ despite the fact that
$1 \not\in Pos_{\mathcal{F}}(g(x))$.
\end{example}

In our approach we pay attention to this phenomenon by having additional
dependency pairs to explicitly mimic the necessary extractions
of (minimally non-$\Pi$-terminating) subterms 
in DP chains (cf.\ $V_c, A_c$ and $S_c$ in Definition \ref{def_dp} below).
Technically, these rules (which we call structural dependency pairs) 
model the explicit extraction of minimal
non-$\Pi$-terminating terms on DP-chains and the introduction
of the suitable dependency pair symbol at the root of these terms.
This mechanism is similar to the way migrating variables are dealt with
in the context-sensitive dependency pair approach of \cite{lpar08}.
However, there using the concepts of ``hidden terms'' and ``function symbols
hiding positions'' it is sufficient to perform subterm extractions
out of contexts of hidden terms in right-hand sides of rewrite rules
and over arguments of functions hidden by the function. 

In the case of forbidden patterns it is necessary to use a more general
mechanism of subterm extraction, since whether a term is hidden within
the right-hand side of a rewrite rule (i.e., forbidden but might
eventually be activated) may depend on the context the right-hand side of the
rule is located in and the concrete instance of this right-hand side.
Hence, in sharp contrast to the context-sensitive dependency pair 
framework of \cite{lpar08} the structural forbidden pattern dependency pairs
associated to a TRS model subterm extractions out of arbitrary 
contexts (cf.\ $V_c, A_c$ and $S_c$ in Definition \ref{def_dp} below). 

However, we cannot disregard the contexts from which minimal non-$\Pi$-terminating
terms are extracted on DP-chains, since these contexts may contribute to
the matching of a forbidden pattern thus influencing the status of some
position in the minimal non-$\Pi$-terminating term.
In order to keep track of the subterm extractions in dependency pair
chains a context is associated to each dependency pair.
It represents the
context from which a minimal non-$\Pi$-terminating term is extracted when
the dependency pair is applied.

Informally, this amounts to an extended \emph{contextual} version of
dependency pairs 
which incorporates the full information of the given rules (especially
the complete right-hand sides) in the form of associated contexts, but
which still enables the typical DP-based reasoning enriched by
\emph{structural} DP-rules that can descend into variable
subterms of right-hand sides as well as to control where subsequent
DP-reductions are allowed to take place.

Before defining contextual dependency pairs we observe that sometimes
positions of right-hand sides are forbidden regardless of the instantiation
or location (in a context) of this right-hand side. In particular
positions forbidden by
\emph{stable} forbidden patterns have this property. We will use this
observation to reduce the number of dependency pairs that we have to consider
(cf.\ Definition \ref{def_dp} below).

\begin{definition}[stable forbidden pattern]
\label{def_stb}
Given a rewrite system $\mathcal{R}$, a forbidden pattern $\pi = (t, p, \lambda)$ is
called \emph{stable} if $t$ is linear and no rule overlaps $t$ at any position
parallel to $p$ if $\lambda = b$ and no rule overlaps $t$ at any position
parallel to or below $p$ if $\lambda = h$. 
By $Stb(\Pi)$ we denote the subset of stable forbidden patterns of $\Pi$.
\end{definition}

The crucial property of stable forbidden patterns is that the status of
positions forbidden by such patterns in some term is not altered
through reductions of $s$ at positions parallel to or below the forbidden one.

\begin{lemma}
\label{lem_stb_patterns}
Let $\mathcal{R}$ be a rewrite system and let $\pi$ be a stable pattern
matching a subterm $s|_p$ of a term $s$ (and thus forbidding some position $p.q$
in $s$). Then, if $s \overset{p'}{\rightarrow}_{\mathcal{R}, \Pi} t$ for some $p' \parallel p.q$
or $p' > p.q$, 
$p.q$ is 
also
forbidden in $t$.
\end{lemma}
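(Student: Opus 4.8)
The plan is to unfold the definition of a stable pattern and track what a single rewrite step at position $p' \parallel p.q$ or $p' > p.q$ can change about the match of $\pi$ at the subterm $s|_p$. Write $\pi = (t,p,\lambda)$ with $\lambda \in \{b,h\}$ (the cases relevant to the paper), let $\sigma$ be the substitution witnessing $s|_o = t\sigma$ where $o.p = p.q$ — so $o$ is the "anchor'' position of the match and $q$ is the relative position $p$ inside $\pi$, and $p.q = o.p$ is the forbidden position. (To avoid a clash of names I would actually rename things in the write-up so that the pattern's distinguished position is not also called $p$.) The goal is: after the step $s \overset{p'}{\rightarrow} t'$ with $p'$ parallel to or below the forbidden position, the same anchor $o$ still carries a match of $t$, and hence $p.q$ is still forbidden in $t'$.

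First I would do a case analysis on the relative position of $p'$ and the anchor $o$. If $p' \parallel o$, the subterm at $o$ is untouched, so $t'|_o = s|_o = t\sigma$ and the match persists; done. If $o \le p'$, then the step takes place strictly inside $s|_o = t\sigma$ (strictly, because $p' \parallel p.q$ or $p' > p.q$ rules out $p' = o$ when the forbidden position is $o.p$ with $p$ possibly $\epsilon$ — one has to handle the degenerate $p = \epsilon$ case, but there $\lambda = h$ forbids overlaps at $o$ itself so no step at $o$ is allowed anyway). So assume $p' = o.p''$ for some nonempty-enough $p''$. Now the key dichotomy from the definition of stability: $p''$ is either at a non-variable position of $t$, or below/at a variable position of $t$.

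The heart of the argument is to exclude the first sub-case. If $p''$ were a non-variable position of $t$, then since $s|_o = t\sigma$, the redex used for the step at $p' = o.p''$ would have its left-hand side unifying with $t|_{p''}$ — i.e.\ the applied rule overlaps $t$ at position $p''$. But $p' \parallel p.q$ means $p'' \parallel p$ (the pattern's distinguished position), and $p' > p.q$ means $p'' > p$ (or $p'' \ge p$; the below-or-equal refinement); in the first case this contradicts stability directly (for $\lambda = b$ no overlap parallel to $p$, for $\lambda = h$ no overlap parallel to or below $p$), and the second case ($p'' \ge p$, relevant only when $\lambda = h$ since for $\lambda = b$ the flag pushes the forbidden positions strictly below the anchor-relative position) again contradicts the "parallel to or below'' clause for $h$-patterns. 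Hence $p''$ lies at or below a variable position $v$ of $t$, say $p'' = v.w$. Then the step only rewrites inside $\sigma(x)$ where $x = t|_v$, producing a new substitution $\sigma'$ that agrees with $\sigma$ off $x$ and has $\sigma(x) \to \sigma'(x)$; linearity of $t$ guarantees this is consistent (no other occurrence of $x$ forces an incompatible value), so $t'|_o = t\sigma'$ is still an instance of $t$. Therefore the match of $\pi$ at anchor $o$ survives, the forbidden position $o.p = p.q$ is still forbidden in $t'$, and we are done.

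The main obstacle is the bookkeeping around the flag $\lambda$ and the exact comparison of positions: one must be careful that "forbidden position $p.q$'' corresponds to anchor $o$ and pattern-position $p$ related by $p.q = o.p$ but for $\lambda = b$ the forbidden positions are the ones strictly below (or for $\lambda = a$, strictly above) the anchor-relative occurrence of $p$, so the clause in Definition~\ref{def_stb} that forbids overlaps "parallel to $p$'' (for $b$) versus "parallel to or below $p$'' (for $h$) is exactly calibrated to make the overlap-exclusion step go through — and I would make sure to spell out why a step with $p' > p.q$ in the $h$-case still lands at or below a variable position, using that $h$-stability excludes overlaps at positions $\ge p$ as well. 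Apart from that, linearity of $t$ is used in precisely one place (to recombine the rewritten variable binding into a single well-defined $\sigma'$), and I would flag that explicitly. Everything else is routine position arithmetic.
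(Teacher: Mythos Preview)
Your approach is essentially the same as the paper's: both arguments split on $\lambda \in \{b,h\}$ and combine the no-overlap clause of stability with linearity of the pattern term to conclude that the match at the anchor survives the step. The paper's proof is considerably terser; in particular, for $\lambda = b$ it dispatches the sub-case $p' > p.q$ in one line by observing that such a $p'$ is itself forbidden by $\pi$ (the step is a $\Pi$-step), which is the point you gesture at with ``for $\lambda = b$ the flag pushes the forbidden positions strictly below the anchor-relative position'' but never state crisply. Your position bookkeeping is also muddled by the name clash you yourself flag: writing $o.p = p.q$ tacitly assumes the $h$-case (for $b$-patterns the forbidden $p.q$ lies \emph{strictly below} the anchor-plus-pattern-position), so in a clean write-up you should rename early and keep the $b$/$h$ distinction explicit from the start rather than patching it in the middle of the overlap argument.
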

\begin{proof}
Let $\pi = \langle u, o, \lambda \rangle$. First, if $\lambda = b$, then
$p'$ cannot be below $p.q$ since positions below $p.q$ are forbidden by
$\pi$. Hence, $p'$ is parallel to $p.q$. However, since $u$ is linear
and not overlapped by any rewrite rule from $\mathcal{R}$ parallel to 
$o$, $u$ matches $t|_p$ and thus $p.q$ is 
also 
forbidden in $t$.

Second, if $\lambda = h$, then $u$ is not overlapped by any rule of
$\mathcal{R}$ parallel to $o$ or below $o$. Hence, (also because $u$ is linear)
$u$ matches $t|_p$ and thus $p.q$ is forbidden in $t$.
\end{proof}

We are now ready to define the notion of contextual dependency pairs (CDPs) associated to
a rewrite system with forbidden patterns, CDP-problems and
CDP-chains.

\begin{definition}[extended \emph{contextual} dependency pairs]
\label{def_dp}
Let $(\mathcal{F}, R)$ be a TRS where the signature is partitioned
into defined symbols 
$\mathcal{D}$
and constructors 
$\mathcal{C}$. 
The set of (extended)
\emph{contextual dependency pairs (CDPs)} $CDP(\mathcal{R})$ is
given by  
$DP_c(\mathcal{R}) \uplus V_c(\mathcal{R}) \uplus A_c(\mathcal{R})
\uplus S_c(\mathcal{R})$, 
where
\small
\begin{eqnarray*}
DP_c(\mathcal{R})&=&\{l^{\#} \rightarrow r|_p^{\#}\; [c] \mid l
\rightarrow r \in R, p \in Pos^{Stb(\Pi)}_{\mathcal{D}}(r), c = r[\Box]_p \} \\ 
V_c(\mathcal{R})&=&\{l^{\#} \rightarrow T(r|_p)\; [c] \mid l \rightarrow
r \in R, r|_p = x \in Var, c = r[\Box]_p \} \\ 
A_c(\mathcal{R})&=&\{ T(f(x_1, \ldots, x_{ar(f)})) \rightarrow
f^{\#}(x_1, \ldots, x_{ar(f)})\; [\Box] \mid  
l \rightarrow r \in R, root(r|_p) = f \in \mathcal{D} \} \\
S_c(\mathcal{R})&=&\{T(f(\vec x)) \rightarrow  
T(x_i) [f(\vec x)[\Box]_i] \mid \vec x = x_1, \ldots, x_{ar(f)}, l
\rightarrow r \in R, root(r|_p) = f, i \in \{1, \dots, ar(f)\} \}\,.
\end{eqnarray*}
\normalsize
Here, $T$ is a new auxiliary function symbol (the \emph{token} symbol
for ``shifting attention''). We call $V_c(\mathcal{R})$ \emph{variable descent}
CDPs, $S_c(\mathcal{R})$ \emph{shift} CDPs\ and $A_c(\mathcal{R})$
\emph{activation} CDPs. 
\end{definition}

\begin{example}
\label{ex_inf_dp}
Consider the TRS $\mathcal{R}$ of Example \ref{ex_inf}. Here,
$CDP(\mathcal{R})$ consists of:
\begin{equation*}
 \begin{tabular}[b]{r@{ $\rightarrow$ }l@{\;\;\;\;\;\;}r@{ $\rightarrow$ }l@{\;\;\;\;\;\;}r@{ $\rightarrow$ }l@{\;\;\;\;\;\;}r@{ $\rightarrow$ }l}
    $a^{\#}$ & $a^{\#} [f(\Box)]$ & $a^{\#}$ & $f^{\#}(a) [\Box]$ & $f^{\#}(x)$ & $T(x) [g(\Box)]$ & $T(a)$ & $a^{\#} [\Box]$ \\
    $T(f(x))$ & $f^{\#}(x) [\Box]$ & $T(g(x))$ & $g^{\#}(x) [\Box]$ & $T(f(x))$ & $T(x) [f(\Box)]$ & $T(g(x))$ & $T(x) [g(\Box)]$
  \end{tabular}
\end{equation*}
\end{example}

\emph{Contextual} rules of the shape $l \rightarrow r \; [c]$ can be
interpreted as $l \rightarrow c[r]$ (provided that $Var(c[r]) \subseteq Var(l)$)
when used
as rewrite rules. Slightly abusing notation, for a set $\mathcal{P}$
of such contextual rewrite rules (i.e.~a \emph{contextual TRS}) we
denote by $\rightarrow_\mathcal{P}$ 
the corresponding induced ordinary rewrite relation.
Based on our notion of contextual dependency pairs, we now define
forbidden pattern contextual dependency pair problems (FP-CDP problems) and
forbidden pattern contextual dependency pair chains (FP-CDP chains). Proving the
absence of infinite FP-CDP chains is the main goal of a CDP based
attempt to prove $\Pi$-termination (cf.\ Theorem \ref{thm_cdp} below). 

\begin{definition}[forbidden pattern CDP problem]
A forbidden pattern CDP problem (FP-CDP
problem or just CDP problem) is a quadruple
$(\mathcal{P}, \mathcal{R}, \Pi, T)$ where $\mathcal{P}$ is a
contextual TRS, $\mathcal{R} = (\mathcal{F}, R)$ is a TRS, $\Pi$ is a set of forbidden
patterns over $\mathcal{F}$ and $T$ is a designated
function symbol with $T \not \in \mathcal{F}$
that occurs only at the root position of 
left- and right-hand sides 
of rules
in $\mathcal{P}$
(but not, for example, in contexts).
\end{definition}

\begin{definition}[forbidden pattern CDP chain]
\label{def_chain}
Let $(\mathcal{P}, \mathcal{R}, \Pi, T)$ be a CDP problem where $\mathcal{R} = (\mathcal{F}, R)$. 
The sequence $S \colon s_1 \rightarrow t_1\;[c_1[\Box]_{p_1}], s_2 \rightarrow t_2\;[c_2[\Box]_{p_2}], \dots$
is a $(\mathcal{P}, \mathcal{R}, \Pi, T)$-CDP chain (we also say FP-CDP chain or just CDP chain if
the CDP problem is clear from the context) if
\begin{itemize}
\item there exists a substitution $\sigma: Var \rightarrow \mathcal{T}(\mathcal{F}, V)$, such that
\begin{eqnarray*}
s_1 \sigma & \rightarrow_{\mathcal{P}} & c_1[t_1 \sigma]_{p_1} =
  c_1'[t_1 \sigma]_{p_1'} \\ 
\overset{\smash{\not \leq p_1'}}{\rightarrow}_{\mathcal{R}}^*
c_1''[s_2 \sigma]_{p_1'} & \rightarrow_{\mathcal{P}} & c_1''[c_2[t_2
  \sigma]_{p_2}]_{p_1'} = c_2'[t_2 \sigma]_{p_2'}\\ 
\overset{\smash{\not \leq p_2'}}{\rightarrow}_{\mathcal{R}}^* 
c_2''[s_3 \sigma]_{p_2'} & \rightarrow_{\mathcal{P}} & c_2''[c_3[t_3
\sigma]_{p_3}]_{p_2'} = c_3'[t_3 \sigma]_{p_3'} \;\dots 
\end{eqnarray*} 
where 
$c_i' = c_{i-1}''[c_i]$ and $p_i' = p_{i-1}' p_i$ for all $1 \leq i$
($s_1\sigma = c_0''[s_1\sigma]_{p_0'}$ with $p_0' = \epsilon$, $c_0'' = \Box$),
\item the $\mathcal{R}$-reduction 
$c_i'[t_i \sigma]_{p_i'}
\overset{\smash{\not \leq p_i'}}{\rightarrow}_{\mathcal{R}}^*
c_i''[s_{i+1} \sigma]_{p_i'}$ is empty (i.e.,  
$c_i'[t_i \sigma]_{p_i'} = c_i''[s_{i+1} \sigma]_{p_i'}$) 
whenever $root(t_i) = T$ 
(i.e., the token symbol), and
\item
for each single reduction $s \overset{q}{\rightarrow}_{\mathcal{P}} t$ or $s
\overset{q}{\rightarrow}_{\mathcal{R}} t$  
in this reduction sequence position $q$ is allowed in $erase(s)$ according to $\Pi$.
Here $erase(s)$ is obtained 
from $s$
by replacing all marked dependency pair symbols $f^{\#}$ by their
unmarked versions $f$ and by replacing terms $T(s')$ by $s'$.%
\footnote{
Note that this definition makes sense since whenever a $T$ occurs in $s$, then
$q$ is not below the occurrence of $T$.
Moreover, this definition of $erase$ is formally not compatible with the DP framework, 
since it is based on the 
correspondence of marked dependency pair symbols to the original
  function symbols from which they originated. This correspondence
  might not 
exist in arbitrary CDP problems. However, to restore full modularity the $erase$ function
could be
made
part of the notion of
CDP problem. We refrain from doing so for notational
simplicity.
}
\end{itemize} 
Moreover, $S$ is minimal if
for every $i \geq 0$ every subterm of 
$c_{i}'[t_i\sigma]_{p_{i}'}$
at position $q > p_{i}'$ is
$\Pi$-terminating in its context (w.r.t.\ $\mathcal{R}$).
\end{definition}

\begin{example}
Consider the TRS $\mathcal{R}$ and $\Pi$ from Example \ref{ex_inf} ($CDP(\mathcal{R})$
is given in Example \ref{ex_inf_dp}) and the corresponding FP-CDP 
$P = (CDP(\mathcal{R}), \mathcal{R}, \Pi, T)$. $P$ admits an infinite CDP chain:
\begin{equation*}
a^{\#} \rightarrow f^{\#}(a)\; [\Box], f^{\#}(x) \rightarrow T(x)\;
[g(\Box)], T(a) \rightarrow a^{\#}\; [\Box], \ldots 
\end{equation*}  
corresponding to 
\begin{equation*}
a^{\#} \rightarrow_{DP_c(\mathcal{R})} f^{\#}(a)
       \rightarrow_{V_c(\mathcal{R})}  g(T(a))
       \rightarrow_{A_c(\mathcal{R})}  g(a^{\#}) 
       \rightarrow_{DP_c(\mathcal{R})} g(f^{\#}(a)) \ldots
\end{equation*}  
\end{example}

We say a CDP problem is \emph{finite} if it does not admit an 
infinite minimal CDP chain.
Indeed, the existence of infinite $(CDP(\mathcal{R}), \mathcal{R}, \Pi, T)$-chains
coincides with non-$\Pi$-termination of $\mathcal{R}$. Before proving
this we provide a lemma stating that forbidden rewrite steps can
be extracted out of contexts.

\begin{lemma}[extraction lemma]
\label{lem_extraction}
If $C[s]_p \overset{\geq p}{\rightarrow}_{\Pi} C[t]_p$, then 
$s \rightarrow_{\Pi} t$.
\end{lemma}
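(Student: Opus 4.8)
The plan is to show that the forbidden-pattern restriction can only make \emph{more} positions forbidden as we move a step deeper into a larger context, never fewer. Concretely, I want to prove: if $C[s]_p \overset{q'}{\rightarrow}_{\mathcal R} C[t]_p$ with $q' \geq p$ and this step is allowed in $C[s]_p$ (i.e.\ $q' \in Pos^{\Pi}(C[s]_p)$), then writing $q' = p.q$ the step $s \overset{q}{\rightarrow}_{\mathcal R} t$ is allowed in $s$, i.e.\ $q \in Pos^{\Pi}(s)$. The underlying rewrite step (which rule, which matching substitution) is literally the same once restricted to the subterm at $p$; the only thing to verify is the allowedness condition. Equivalently, by contraposition, I would show that if $q$ is forbidden in $s$ then $p.q$ is forbidden in $C[s]_p$.

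First I would unfold the definition of forbidden position: $q$ is forbidden in $s$ w.r.t.\ $\Pi$ means there is some $\pi = \langle u, o, \lambda\rangle \in \Pi$ with $q \in P_\pi(s)$. I treat the two remaining flag cases separately (we are only dealing with $b$- and $h$-patterns here). If $\lambda = h$: then $q \in P_{u,o}(s)$, so there is a position $a \in Pos(s)$ and a substitution $\tau$ with $s|_a = u\tau$ and $q = a.o$. Since $s = (C[s]_p)|_p$, we get $(C[s]_p)|_{p.a} = u\tau$ as well, hence $p.a \in$ the witnessing set and $p.q = p.a.o \in P_{u,o}(C[s]_p) = P_\pi(C[s]_p)$, so $p.q$ is forbidden in $C[s]_p$. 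If $\lambda = b$: then $q \in P_\pi(s)$ means there is $q'' \in P_{u,o}(s)$ with $q > q''$; by the same subterm-lifting argument $p.q'' \in P_{u,o}(C[s]_p)$ and $p.q > p.q''$, so again $p.q \in P_\pi(C[s]_p)$. In both cases the forbidding pattern match lives entirely inside the $s$-part, so it is preserved verbatim under prefixing by $p$.

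The routine but necessary bookkeeping step is the observation that matching a (variable-disjoint) pattern term $u$ at a position is a purely local property of the subterm at that position: $(C[s]_p)|_{p.a} = s|_a$ for every $a \in Pos(s)$, which is just associativity of subterm selection. This gives the set inclusion $\{\, p.a \mid s|_a = u\tau \,\} \subseteq \{\, a' \mid (C[s]_p)|_{a'} = u\tau \,\}$ and hence $p \cdot P_{u,o}(s) \subseteq P_{u,o}(C[s]_p)$, from which the flag-case reasoning above follows. I would then chain this over all $\pi \in \Pi$ to conclude $p.q \in \overline{Pos}^{\Pi}(C[s]_p)$ whenever $q \in \overline{Pos}^{\Pi}(s)$, which is exactly the contrapositive of the allowedness claim. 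Note the statement only concerns a single step $\overset{\geq p}{\rightarrow}_{\Pi}$, so no induction on a reduction sequence is needed.

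The only mild subtlety — and the closest thing to an obstacle — is that forbiddenness in the \emph{larger} term $C[s]_p$ could a priori come from a pattern match that straddles the boundary between $C$ and $s$ (i.e.\ a match at some position $o' < p$ whose forbidden position reaches into $s$). But that works \emph{against} allowedness, not for it: such extra matches can only forbid \emph{more} positions $p.q$ in $C[s]_p$, so if the step was allowed in $C[s]_p$ a fortiori the corresponding position $q$ was not forbidden in $s$ via any pattern fully inside $s$ — and patterns fully inside $s$ are the only way $q$ can be forbidden in $s$, since $s$ has no surrounding context. Hence the direction we need is the easy one, and the proof is essentially the single set-inclusion $p \cdot P_\pi(s) \subseteq P_\pi(C[s]_p)$ together with a two-case flag analysis.
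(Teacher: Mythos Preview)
Your proof is correct and is precisely the unfolding of what the paper means by its one-line proof ``Immediate by the definition of rewriting with forbidden patterns.'' You have spelled out the contrapositive set inclusion $p \cdot P_\pi(s) \subseteq P_\pi(C[s]_p)$ and the flag-wise case analysis that the paper leaves implicit; there is no genuine difference in approach, only in level of detail.
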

\begin{proof}
Immediate by the definition of rewriting with forbidden patterns.
\end{proof}

\begin{theorem}
\label{thm_cdp}
Let $\mathcal{R}$ be a TRS with an associated set of forbidden
patterns $\Pi$. $\mathcal{R}$ is $\Pi$-terminating if
and only if the FP-CDP problem $(CDP(\mathcal{R}), \mathcal{R},
\Pi, T)$ 
is finite. 
\end{theorem}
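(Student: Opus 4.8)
The plan is to prove the two directions of the equivalence separately, since each requires building a correspondence between concrete infinite $\Pi$-reduction sequences of $\mathcal{R}$ and infinite minimal CDP chains of $(CDP(\mathcal{R}),\mathcal{R},\Pi,T)$.

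\textbf{From non-$\Pi$-termination to an infinite minimal CDP chain.}
Suppose $\mathcal{R}$ is not $\Pi$-terminating. Then there is a term admitting an infinite $\Pi$-reduction, and by taking a minimal such term (in the empty context, or more generally extracting a minimal non-$\Pi$-terminating subterm as in the last sentence of the proof of Lemma~\ref{lem_min_nonterm}) we obtain a term $s_0$ that is minimal non-$\Pi$-terminating in some context $C_0[\Box]_{q_0}$. I would then iterate Lemma~\ref{lem_min_nonterm}: it hands us a prefix $C[s]_q \overset{\not\le q}{\rightarrow}_\Pi^* C'[l\sigma]_q \overset{q}{\rightarrow}_\Pi C'[r\sigma]_q = C'[t]_q$ together with a fresh minimal non-$\Pi$-terminating subterm $t|_p$ in context $C'[t[\Box]_p]_q$, and we repeat forever. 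The key modelling step is to translate each such round into CDP steps: the root step $l\sigma \to r\sigma$ with the extracted position $p$ is simulated by $l^{\#}\to r|_p^{\#}\,[r[\Box]_p]$ from $DP_c(\mathcal{R})$ if $p\in Pos_{\mathcal{D}}^{Stb(\Pi)}(r)$; if $r|_p$ is a variable it is simulated by the $V_c$ rule followed by a sequence of $S_c$ shift steps descending through the instantiated variable into the actual minimal subterm, capped by an $A_c$ activation step that puts the dependency-pair symbol at the root; and if $p$ is a non-variable position of $r$ forbidden only by a non-stable pattern, one again routes through $V_c/S_c/A_c$ by descending from a suitable variable ancestor. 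The contexts attached to the CDPs are exactly designed so that the accumulated context $c_i'$ of Definition~\ref{def_chain} equals $C'$ (the surrounding context from Lemma~\ref{lem_min_nonterm}), and the intermediate $\overset{\not\le p_i'}{\rightarrow}_\mathcal{R}^*$ steps are precisely the $\overset{\not\le q}{\rightarrow}_\Pi^*$ prefix of the next round. One must check the three side conditions of Definition~\ref{def_chain}: the emptiness condition after $T$-rooted CDPs is satisfied because the $S_c$/$A_c$ steps are applied consecutively with no interleaved $\mathcal{R}$-steps by construction; minimality is inherited from the minimality supplied by Lemma~\ref{lem_min_nonterm} at each round; and the allowedness-in-$erase$ condition holds because $erase$ of the CDP term is (up to the bookkeeping symbols) exactly the term $C'[\cdots]_q$ in which the original $\Pi$-step was performed, so allowed positions transfer. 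Lemma~\ref{lem_stb_patterns} is what licenses using $DP_c$ (with a stable pattern) directly: once we are inside the extracted subterm, a stably forbidden position stays forbidden, so dropping the surrounding context does not wrongly unforbid anything.

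\textbf{From an infinite minimal CDP chain to non-$\Pi$-termination.}
Conversely, assume there is an infinite minimal CDP chain $s_1\to t_1\,[c_1],\,s_2\to t_2\,[c_2],\dots$ with witnessing substitution $\sigma$ and the reduction sequence of Definition~\ref{def_chain}. I would read off from it an infinite $\Pi$-reduction of $\mathcal{R}$ as follows. Apply $erase$ throughout the chain's reduction sequence; each $\mathcal{R}$-step survives verbatim and is allowed in the erased term by the third chain condition, hence is a genuine $\Pi$-step. Each $DP_c$-step $l^{\#}\to r|_p^{\#}\,[r[\Box]_p]$ erases to a context plugging of $r\sigma$ in place of $l\sigma$, i.e.\ exactly one application of $l\to r$, which is allowed because the active position is allowed in the erased term. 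The $V_c$, $S_c$, $A_c$ steps all erase to identities on the underlying $\mathcal{F}$-term (they only manipulate $T$ and $\#$), so they contribute nothing but are harmless; the emptiness side condition guarantees that between a $V_c$ and its matching $A_c$ no real $\mathcal{R}$-steps are lost. Thus $erase$ of the infinite chain is an infinite $\Pi$-reduction, witnessing non-$\Pi$-termination of $\mathcal{R}$. (Minimality of the chain is not strictly needed for this direction, but it matches the ``finite = no infinite \emph{minimal} chain'' definition and makes the correspondence tight.)

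\textbf{Main obstacle.}
The routine parts are the $erase$-transfer in the second direction and the base extraction in the first. The delicate part is the first direction's simulation of a single round of Lemma~\ref{lem_min_nonterm} when the extracted minimal subterm sits at a position $p$ of $t=r\sigma$ that is \emph{not} a function-symbol position of $r$ (the phenomenon highlighted right after Lemma~\ref{lem_min_nonterm}): one has to exhibit the exact finite sequence of $V_c/S_c$ descent steps from an appropriate variable $x$ of $r$ down to the root of the minimal subterm, verify that every intermediate position is allowed in the corresponding erased term (using that forbidding is monotone under the restrictions on $b$- and $h$-patterns, and Lemma~\ref{lem_stb_patterns} for the stable case), and confirm that the accumulated contexts $c_i'$ and positions $p_i'$ produced by Definition~\ref{def_chain} really coincide with the contexts $C'$ and positions $q$ produced by iterating Lemma~\ref{lem_min_nonterm}. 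Keeping this bookkeeping consistent across infinitely many rounds — in particular that the ``context carried by the chain'' never diverges from the ``context in which the original $\Pi$-reduction lives'' — is the technical heart of the proof.
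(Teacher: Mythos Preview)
Your overall architecture matches the paper's: iterate Lemma~\ref{lem_min_nonterm} to build the chain in one direction, and apply $erase$ to collapse a chain into a $\Pi$-reduction in the other. But two genuine gaps remain.

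\textbf{First direction: existence of the $S_c$ steps.} You write ``followed by a sequence of $S_c$ shift steps descending through the instantiated variable into the actual minimal subterm'' as if this were automatic. It is not: $S_c(\mathcal{R})$ contains $T(f(\vec x))\to T(x_i)\,[f(\vec x)[\Box]_i]$ \emph{only} for symbols $f$ that occur in some right-hand side of $\mathcal{R}$. So you must argue that every symbol encountered on the path from the variable position $q'$ of $r$ down to the position $q$ of the minimal subterm inside $r\sigma$ is such a symbol. The paper does this by choosing the initial term $s$ with the stronger property that every proper subterm of $s$ is $\Pi$-terminating \emph{in every context} (not just in $s[\Box]_p$), and then observing: if some $f$ on that path does not occur in any right-hand side, then the $f$-headed subterm $s_i|_{q'}$ must be the descendant of a proper subterm of the original $s$; but $s_i|_{q'}$ is non-$\Pi$-terminating (it contains $t'_{i+1}$), contradicting the choice of $s$ via Lemma~\ref{lem_extraction}. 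The same argument is needed to justify the $A_c$ step. Your proposal neither strengthens the minimality of the starting term nor supplies this argument, so the $V_c/S_c/A_c$ simulation is unjustified as written. Relatedly, your case ``$p$ is a non-variable position of $r$ forbidden only by a non-stable pattern'' is spurious: such a $p$ lies in $Pos^{Stb(\Pi)}_{\mathcal D}(r)$ and is handled directly by $DP_c$; the actual use of Lemma~\ref{lem_stb_patterns} in the paper is the converse, namely to rule out that $q\in Pos_{\mathcal F}(r)$ could ever be stably forbidden (otherwise no later step at $p_{i+1}$ is possible).

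\textbf{Second direction: $V_c$ does not erase to the identity.} A $V_c$-rule $l^{\#}\to T(x)\,[r[\Box]_p]$ interpreted as $l^{\#}\to r[T(x)]_p$ erases to $l\to r$, i.e.\ one genuine $\mathcal R$-step, not zero. Only $S_c$ and $A_c$ erase to identities. This actually helps you, but you still owe the argument that the erased sequence is infinite: you must check that no infinite chain can consist solely of $S_c$- and $A_c$-steps (it cannot, since each $S_c$ strictly shrinks the term under $T$ and $A_c$ removes $T$, forcing a subsequent $DP_c$ or $V_c$ step). The paper states this explicitly; your proposal glosses over it.
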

\begin{proof}
\textsc{if: } 
Let $\mathcal{R}$ be non-$\Pi$-terminating. 
According to Lemma \ref{lem_min_nonterm}, there exist terms $s,
  s_i, t_i$, $t_i'$ and an infinite reduction sequence $S$ of shape 
\begin{eqnarray*}
S\colon s & \overset{\smash{> \epsilon^{\textcolor{white}{'}}}}{\rightarrow}_{\Pi}^* & 
t_1 \overset{\smash{\epsilon}}{\rightarrow}_{\Pi}
s_1 = C_2'[t_2']_{p_2} \\
&\overset{\smash{\not\leq p_2}}{\rightarrow}_{\Pi}^* & C_2[t_2]_{p_2}
\overset{p_2}{\rightarrow}_{\Pi} C_2[s_2]_{p_2} = C_3'[t_3']_{p_3} \\
&\overset{\smash{\not\leq p_3}}{\rightarrow}_{\Pi}^* & C_3[t_3]_{p_3}
\overset{p_3}{\rightarrow}_{\Pi} C_3[s_3]_{p_3} = C_4'[t_4']_{p_4} \dots
\end{eqnarray*}
such that
$p_i \leq p_{i+1}$, $t_i'$ is minimally non-$\Pi$-terminating in
  $C_i'[\Box]_{p_i}$ for 
all $i \geq 1$, and every proper subterm of $s$ is $\Pi$-terminating
(regardless of the context, hence $s$ is also minimal non-$\Pi$-terminating in $\Box$). 
Here, $p_1 = \epsilon$, $C_1 = C_1' = \Box$ and $t_1' = t_1$.

We are going to construct 
an infinite $(CDP(\mathcal{R}), \mathcal{R}, \Pi, T)$-chain $T$ by
associating a (sequence of) dependency pair(s) to each $C_i[t_i]_{p_i}
\overset{p_i}{\rightarrow}_{\Pi} C_i[s_i]_{p_i}$ step.
Consider one of these reduction steps 
$C_i[t_i]_{p_i}
\overset{p_i}{\rightarrow}_{\Pi, l \rightarrow r} C_i[s_i]_{p_i} = C_{i+1}'[t_{i+1}']_{p_{i+1}}$.
Let $p_i.q = p_{i+1}$; we distinguish 2 cases:

First, if $q \in Pos_{\mathcal{F}}(r)$, then the dependency pair
$l^{\#} \rightarrow r|_q^{\#} [c] \in DP_c(\mathcal{R})$ is used. Note that
$root(r|_q) \in \mathcal{D}$, as $t_{i+1}'$ is minimally non-$\Pi$-terminating.
Moreover, $q \in Pos^{Stb(\Pi)}$ since otherwise $p_{i+1}$ would be
forbidden by a stable forbidden pattern in $C_{i+1}'[t_{i+1}']_{p_{i+1}}$
and thus also in every term obtained from $C_{i+1}'[t_{i+1}']_{p_{i+1}}$ through reduction parallel to
or below
$p_{i+1}$, due to Lemma \ref{lem_stb_patterns}. Hence, there could not be a further step at position
$p_{i+1}$ contradicting the existence of a reduction chain of the above shape.
Finally, we also have $C_i[c] = C_{i+1}'$ by Definition \ref{def_dp}.

Second, if $q \not\in Pos_{\mathcal{F}}(r)$, let $q' \leq q$ be
the unique variable position of $r$ that is above $q$. Now we construct
a sequence of dependency pairs starting with 
$l^{\#} \rightarrow T(x) [c] \in V(\mathcal{R})$ where
$c = r[\Box]_{q'}$. By using this 
dependency pair we ``introduce'' the token symbol $T$ at position
$q'$ in $s_i$. The goal now is to shift it to position $q$.

In the following we say that
a function symbol 
$f$
is a \emph{shift symbol} if there exist
dependency pairs $T(f(\vec x)) \rightarrow T(x_i) [f(\vec x)[\Box]_i]$
for all $i \in \{1, \ldots, ar(f)\}$. 
Assume $q' \not= q$ (say
$q'.i.o = q$) and let $root(s_i|_{q'}) = f$. 

If $f$ is not a shift symbol, then $f$ does not occur in the
right-hand side of a rewrite rule at all (according to Definition \ref{def_dp}). 
However, if $f$ does not occur in the
right-hand side of any rule of $\mathcal{R}$, $s_i|_{q'}$ must be the
descendant of some proper subterm of $s$. However, $s_i|_{q'}$ is
non-$\Pi$-terminating since it contains $t_{i+1}'$ which is not
$\Pi$-terminating in its context.
Thus $s_i|_{q'}$
cannot be a successor
of such a proper subterm of $s$, since these subterms were assumed
to be $\Pi$-terminating (in any context) (cf.\ also Lemma \ref{lem_extraction}).  

Hence, $f$ is a shift symbol and thus there is a dependency pair
$T(f(x_1, \ldots, x_{ar(f)})) \rightarrow T(x_i) [c] \in S_c(\mathcal{R})$
where 
$f(x_1, \ldots, x_{i-1}, \Box, x_{i+1}, \ldots, x_{ar(f)}) = c$.
By adding this dependency pair we 
shift
the token symbol to position
$q'.i$ in $s_i$ (more precisely with the addition of the shift dependency pair
we are now considering a term $s_i'$ with $erase(s_i') = erase(s_i)$ where the
unique occurrence of the token symbol is at position $q'.i$). 
If $q'.i \not= q$ 
we add more dependency
pairs from $S_c(\mathcal{R})$ to shift the token symbol to
$q'.i.i'$, $q'.i.i'.i''$, $\ldots$, until the token
is finally shifted to $q$.

Finally, we add the 
activation
dependency pair 
$T(g(\vec x)) \rightarrow g^{\#}(\vec x) [\Box] \in
A_c(\mathcal{R})$, 
where
$g = root(s_i|_q)$. 
Note that, as for the shift dependency pairs, here
$g$ must occur in the right-hand side of some rewrite rule,
since otherwise $s_i|_q$ would be the descendent of some
proper subterm of $s$ which contradicts non-$\Pi$-termination
of $t_{i+1}'$. 

Moreover, since $s_i|_q$ is minimally non-$\Pi$-terminating, we have 
$root(s_i|_q) \in \mathcal{D}$.

It is easy to see that the infinite sequence of dependency pairs 
$T$
obtained by this
construction actually forms an infinite DP chain, where
$\sigma$ is given by the substitutions used in
the $C_i[t_i]_{p_i}
\overset{p_i}{\rightarrow}_{\Pi} C_i[s_i]_{p_i}$ steps
of $S$ (note that we consider CDPs in chains to be variable
disjoint). The fact that we actually have a valid CDP chain
is a direct consequence of the particular choice of $S$. 

\textsc{only if: } If there exists an infinite
CDP-chain we obtain an infinite $\mathcal{R}$-reduction
by considering the 
($CDP(\mathcal{R}) \cup \mathcal{R}$)-reduction
of Definition \ref{def_chain}. Then by
applying $erase$ to every term in this chain, we get
that every single ($CDP(\mathcal{R}) \cup \mathcal{R}$)-step can be
simulated by 0 or 1 $\rightarrow_{\mathcal{R}}$-reduction
steps. Here the simulating reduction is 
empty only if a $CDP(\mathcal{R})$-step
with rules 
from $S_c(\mathcal{R})$ or $A_c(\mathcal{R})$ occurs.
However,
it is easy to see that no infinite
$CDP(\mathcal{R}) \cup \mathcal{R}$-reduction sequence 
can use only these rules, hence
the simulating $\mathcal{R}$-reduction is infinite as well.
\end{proof}

Now, following the dependency pair framework of \cite{jar06}
we define CDP processors as functions mapping CDP problems
to sets of CDP problems.

It is easy to observe that each FP-CDP chain
w.r.t.\ a FP-CDP problem 
$(\mathcal{P}, \mathcal{R}, \Pi, T)$ is also an ordinary
(though not 
necessarily 
minimal) DP\ chain w.r.t.\ $(\mathcal{P}, \mathcal{R})$
(when disregarding the contexts of DPs).
Hence, in some cases processors that are sound in the ordinary DP framework
of \cite{jar06} and do not rely on minimality can be adapted to work also in the
forbidden pattern contextual extension of the
DP framework. One example of such a processor is the reduction pair
processors not using usable rules (\cite{jar06}). Another important
example is the dependency graph processor. Both processors have been
used in our experiments. 
In both cases, given a CDP problem
$(\mathcal{P}, \mathcal{R}, \Pi, T)$, the processors are applied to
the ordinary DP problem $(\mathcal{P}', \mathcal{R})$, where $\mathcal{P}'$
is obtained from $\mathcal{P}$ 
by stripping off the contexts of the contextual rules.

\section{A Specific CDP Processor}
\label{scp}

In the following we develop a method to prove the absence
of minimal CDP chains by inspecting the contexts of dependency pairs.
To this end we consider the nested contexts of 
consecutive dependency pairs of candidates for infinite DP chains.
Then, if for such a candidate in the obtained nested contexts of 
consecutive dependency pairs the 
unique
box position is forbidden (by certain forbidden patterns), 
the candidate chain is not a proper FP-CDP chain. 
A CDP processor could then soundly delete a CDP $s \rightarrow t [c]$ 
from a CDP-problem if no candidate chain containing
$s \rightarrow t [c]$ is a proper FP-CDP chain 
(provided that the set of candidates is complete).

\begin{example}
\label{ex_analyzing_contexts}
Consider a CDP problem $(\mathcal{P}, \mathcal{R}, \Pi, T)$ where
\begin{equation*}
\begin{tabular}{r@{ $\;\;=\;\;$ }l@{$\;\;\;\;\;\;\;\;\;$}r@{ $\;\;=\;\;$ }l}
$\mathcal{P}$ & $\{a^{\#} \rightarrow a^{\#} [f(\Box)]\}$ &
$\mathcal{R}$ & $\{a \rightarrow f(a)\}$ \\
$\Pi$ & $\{\langle f(f(f(x))), 1.1, b \rangle\}$.
\end{tabular}
\end{equation*}
If there were an infinite FP-CDP chain w.r.t.\ this CDP problem, then it
would consist of an infinite sequence of the only CDP $a^{\#} \rightarrow a^{\#} [f(\Box)]$.
Hence, this sequence is the only candidate for an FP-CDP chain. Now
considering 
the contexts occurring in this CDP chain candidate we get $f(f(...(\Box)...))$
(for any sufficiently large finite subsequence). However, in this term
the box position is forbidden by $\Pi$. Hence the CDP chain candidate
is not a proper FP-CDP chain and since it was the only candidate we conclude
finiteness of the CDP problem.
\end{example}

Unfortunately, there are two major problems with this approach. First, in order
to obtain a sound CDP processor one would have to consider
candidates for CDP chains in a complete way. 
Second,
according to Definition \ref{def_chain} contexts are not constant but
may be modified at positions parallel to the box position in FP-CDP chains.

We will deal with the second problem first, starting with the
observation 
that the (nested)
contexts are stable modulo reductions parallel to the position of
the hole, i.e.\ they are altered only through reductions parallel
to the hole position. Hence, if forbidden patterns oblivious to
this kind of parallel reductions forbid the hole position in
such a context, the corresponding sequence of dependency pairs
cannot form an FP-CDP chain according to Definition \ref{def_chain}. 
We characterize (or rather approximate)
these patterns by the definition of the subset $\Pi_{orth}$ of $\Pi$.
The name $\Pi_{orth}$ expresses that these forbidden patterns
are orthogonal to a given rewrite system $\mathcal{R}$ in that
they are not overlapped by rules of $\mathcal{R}$.

\begin{definition}[$\Pi_{orth}$]
Let $\mathcal{R}$ be a TRS and $\Pi$ be a set of corresponding
forbidden patterns. The set $\Pi_{orth} \subseteq \Pi$ consists of
those forbidden patterns $\langle t, p, \lambda\rangle$
where $\lambda \in \{h, b\}$, $t$ is linear and 
not overlapped by any rule of $\mathcal{R}$
at any position that is parallel to or below $p$.
\end{definition}

The following lemma is the key result for analyzing nested contexts
of CDP chain candidates. It states that whenever the box position $q$
of a nested context corresponding to a CDP chain candidate (after substituting the
right-hand side of the last CDP) is forbidden, then this position is
also forbidden in every other term obtained from the nested context by
rewriting at positions parallel to $q$.

\begin{lemma}
\label{lem_contexts}
Let $(\mathcal{P}, \mathcal{R}, \Pi, T)$ be a CDP problem and let
$s_1 \rightarrow t_1 [c_1], \ldots, s_n \rightarrow t_n [c_n]$ be a sequence of CDPs.
If position $p_1.\cdots.p_n$ is forbidden in the term 
$c_1[c_2[\ldots c_n[erase(t_n)]_{p_n}\ldots]_{p_2}]_{p_1}$ by a forbidden pattern
from $\Pi_{orth}$, then the same position is
forbidden in 
$c_1'[c_2'[\ldots c_n'[t_n']_{p_n}\ldots]_{p_2}]_{p_1}$
where $c_i \rightarrow_{\mathcal{R}}^* c_i'$ with reductions parallel to $p_i$ for
all $1 \leq i \leq n$ and $erase(t_n) \overset{> \epsilon}{\rightarrow}_{\mathcal{R}} t_n'$.
\end{lemma}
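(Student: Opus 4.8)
The plan is to realise the passage from the nested term $B := c_1[c_2[\ldots c_n[erase(t_n)]_{p_n}\ldots]_{p_2}]_{p_1}$ to $B' := c_1'[c_2'[\ldots c_n'[t_n']_{p_n}\ldots]_{p_2}]_{p_1}$ as a sequence of $\mathcal{R}$-steps, each of which takes place at a position parallel to, or strictly below, $p := p_1.\cdots.p_n$, and then to show that the property ``$p$ is forbidden by a pattern from $\Pi_{orth}$'' is preserved by every such step.

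First I would do the position bookkeeping. Inside $B$ the context $c_i$ occupies the positions of the form $p_1.\cdots.p_{i-1}.w$ with $w$ not below $p_i$; hence a reduction of $c_i$ at a position $w \parallel p_i$ appears in $B$ at $p_1.\cdots.p_{i-1}.w$, and since $w \parallel p_i$ entails $w \parallel p_i.p_{i+1}.\cdots.p_n$, this position is parallel to $p$. The single step $erase(t_n) \overset{> \epsilon}{\rightarrow}_{\mathcal{R}} t_n'$ appears in $B$ at a position strictly below $p$. All these step positions are pairwise disjoint, so performing the steps one after another gives $B \rightarrow_{\mathcal{R}}^* B'$; moreover no step lies at $p$ or at any prefix of $p$, so $p$ is still a position of $B'$.

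The heart of the argument is the invariance claim. Let $\pi = \langle u, o, \lambda \rangle \in \Pi_{orth}$ be a pattern forbidding $p$ in $B$, witnessed by a match $u\sigma_0 = B|_{\bar o}$ at some position $\bar o \leq p$ with $\bar o.o = p$ if $\lambda = h$ and $\bar o.o < p$ if $\lambda = b$. I claim that each of the above steps leaves $u$ matching at $\bar o$. A step whose position is parallel to $\bar o$ lies outside $B|_{\bar o}$ and changes nothing there. A step whose position is $> \bar o$ sits inside $B|_{\bar o} = u\sigma_0$ at some position $o'$, and (using $\bar o.o \leq p$) ``parallel to or strictly below $p$'' translates to ``$o'$ is parallel to or strictly below $o$''. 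Now $o'$ is either a non-variable position of $u$, or at or below a variable position of $u$; the first case is impossible, for if a rule $l \rightarrow r$ were applicable at $o'$ inside $u\sigma_0$, then $l$ and $u|_{o'}$ would have a common instance and hence unify, i.e.\ $l$ would overlap $u$ at a position parallel to or below $o$, contradicting $\pi \in \Pi_{orth}$. So the step occurs at or below a variable position of $u$, and since $u$ is linear the match of $u$ at $\bar o$ survives with an updated substitution. Iterating over all steps, $u$ still matches $B'|_{\bar o}$, so $\pi$ forbids in $B'$ exactly the positions it computes from $\bar o$ and $o$ --- in particular it still forbids $p$.

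The main obstacle --- and the reason the sharpened notion $\Pi_{orth}$ is needed rather than mere stability or a direct appeal to Lemma~\ref{lem_stb_patterns} --- is the one step lying strictly below $p$. For $h$-patterns Lemma~\ref{lem_stb_patterns} already copes with steps below the forbidden position, but for $b$-patterns it deliberately does not: there this case is dismissed because $\Pi$-rewriting never contracts a forbidden redex, an argument unavailable here since we manipulate plain $\mathcal{R}$-steps. Treating it needs precisely the extra clause of $\Pi_{orth}$ forbidding overlaps \emph{below} $o$ as well, which is exactly what the uniform matching argument above exploits. The remaining work --- the position arithmetic relating ``parallel to $p_i$'' in $c_i$ to ``parallel to or below $p$'' in the nested term, and the check that $\bar o$, $\bar o.o$ and $p$ themselves stay put --- is routine but should be carried out with some care.
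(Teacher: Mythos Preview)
Your proof is correct and follows essentially the same approach as the paper: realise the passage from $B$ to $B'$ as a plain $\mathcal{R}$-reduction with all steps parallel to or strictly below $p$, and argue that a $\Pi_{orth}$-match at $\bar o$ survives because any destroying step would have to land at a non-variable position of the pattern parallel to or below $o$, contradicting orthogonality. The only difference is presentational --- the paper argues by contradiction in one paragraph (assume the match is lost, locate the offending step, derive an overlap), whereas you carry out the position bookkeeping explicitly and verify preservation step by step; your added remark on why $\Pi_{orth}$ rather than mere stability is needed for the step strictly below $p$ is a nice clarification that the paper leaves implicit.
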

\begin{proof}
For brevity let 
\begin{equation*}
c[t_n]_{q} = c_1[c_2[\ldots c_n[erase(t_n)]_{p_n}\ldots]_{p_2}]_{p_1},
\end{equation*}
where $q = p_1.p_2.\cdots.p_n$ and let 
\begin{equation*}
c'[t_n']_q \mbox{ be some } c_1'[c_2'[\ldots c_n'[t_n']_{p_n}\ldots]_{p_2}]_{p_1}
\end{equation*}
where $c_i \rightarrow_{\mathcal{R}}^* c_i'$ with reductions parallel to $p_i$ for
all $1 \leq i \leq n$ and $erase(t_n) \overset{> \epsilon}{\rightarrow}_{\mathcal{R}} t_n'$.

Assume the forbidden pattern $\langle t, o, \lambda \rangle$ forbidding the reduction of 
$c[t_n]_q$ at position
$q$ matches the term at position $q' < q$ and assume moreover
that the same pattern does not match $c'[t_n]_q$. Since we consider
plain $\mathcal{R}$-reduction and not forbidden pattern reduction we
have $S\colon c[t_n]_q \rightarrow_{\mathcal{R}}^* 
c'[t_n']_q$ with reductions parallel to or strictly below $q$. 
Since $t$ does not match
$c'[t_n']_q|_{q'}$ and is linear, there must be some reduction at a position $q'.q''$
where $q'' \in Pos_{\mathcal{F}}(t)$ and $q''$ is either parallel to or below $o$. Hence, $t$ is overlapped by some
rule of $\mathcal{R}$ at some position parallel to or below $o$,
and we get a contradiction to $\langle t, o, \lambda \rangle \in \Pi_{orth}$.
\end{proof}

Lemma \ref{lem_contexts} establishes that for a DP chain candidate
it suffices to consider the nested contexts unmodified as long
as one only considers patterns from $\Pi_{orth}$ to check whether
the nested contexts forbid their hole position 
implying that 
the 
candidate chain is not an actual chain.

Regarding the second problem of considering a complete set of
CDP chain candidates, we present a simple solution based on the idea
of 
taking into account only
all possible candidates for CDP chains of a bounded
length. This approach ultimately leads to the definition of the
\emph{simple context processor}.

However, as indicated in \cite{wst10} more clever ways of
handling this problem may yield even better results regarding power
in termination analysis. In this work we still resort to the
simpler approach, as it enables us to perform on-the-fly synthesis
of forbidden patterns as discussed in Section \ref{synth} below.

The idea of the \emph{simple context processor} is to consider only
CDP chain candidates of a bounded length $n$. Assuming a finite set of
CDPs, there are only finitely many possible sequences of CDPs of this
length. Then, if none of these sequences containing a certain CDP
$s \rightarrow t [c]$ is an FP-CDP chain 
(which then
cannot be part of an
infinite FP-CDP chain, cf.\ Lemma \ref{lem_subsequence_chain} below) it is sound to delete
$s \rightarrow t [c]$ from the given CDP problem.

The following lemma establishes that every finite subsequence of CDPs
forming an FP-CDP chain form an FP-CDP chain in turn. 

\begin{lemma}
\label{lem_subsequence_chain}
Let $(\mathcal{P}, \mathcal{R}, \Pi, T)$ be a CDP problem and
$\alpha_1, \alpha_2, \ldots$ be an FP-CDP chain where $\alpha_i \in \mathcal{P}$
for all $i \geq 1$. 
Then $\alpha_{m}, \alpha_{m+1}, \ldots, \alpha_{m+n}$
as well as $\alpha_m, \alpha_{m+1}, \ldots$
are FP-CDP chains for all $m, n \geq 1$.
\end{lemma}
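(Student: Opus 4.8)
The plan is to unfold Definition~\ref{def_chain} for the given infinite FP-CDP chain and observe that the defining conditions are ``local'' in the sense that they only ever refer to finitely many consecutive CDPs and to the reduction steps glued between them. Concretely, the infinite chain comes with a single substitution $\sigma$ and an infinite $(\mathcal{P}\cup\mathcal{R})$-reduction sequence
\[
s_1\sigma \rightarrow_{\mathcal{P}} c_1[t_1\sigma]_{p_1} = c_1'[t_1\sigma]_{p_1'} \overset{\not\leq p_1'}{\rightarrow}_{\mathcal{R}}^* c_1''[s_2\sigma]_{p_1'} \rightarrow_{\mathcal{P}} \cdots
\]
together with the emptiness condition for the $\mathcal{R}$-part whenever $root(t_i)=T$ and the allowedness condition on $erase(\cdot)$ for each single step. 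First I would treat the suffix case $\alpha_m,\alpha_{m+1},\ldots$: here one simply restarts the bookkeeping of Definition~\ref{def_chain} at index $m$, resetting $p_{m-1}'$ to $\epsilon$ and $c_{m-1}''$ to $\Box$; the reduction sequence used as a witness is obtained from the original one by taking the infinite tail starting at $s_m\sigma$ (extracted from inside the context $c_{m-1}''[\cdot]_{p_{m-1}'}$ of the original chain via the extraction Lemma~\ref{lem_extraction}, which preserves $\rightarrow_\Pi$-allowedness, hence $\rightarrow_{\mathcal{R}}$- and $\rightarrow_{\mathcal{P}}$-allowedness of the erased steps). The substitution $\sigma$ is reused unchanged. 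The emptiness and allowedness side-conditions for indices $\geq m$ are inherited verbatim.

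For the finite-segment case $\alpha_m,\ldots,\alpha_{m+n}$ I would take the same extracted tail and simply truncate it after the step $\rightarrow_{\mathcal{P}}$ that introduces $t_{m+n}\sigma$, i.e.\ stop at the term $c_{m+n}'[t_{m+n}\sigma]_{p_{m+n}'}$ (in the re-indexed coordinates). Definition~\ref{def_chain} is formulated so that a finite sequence of CDPs is a chain exactly when such a finite witnessing reduction exists with all the per-step conditions holding, and nothing in the definition forces the chain to be extendable; the trailing $\overset{\not\leq p_{m+n}'}{\rightarrow}_{\mathcal{R}}^*$ part for the last CDP may be taken empty (it is existentially quantified and the empty reduction always qualifies, and it is even forced to be empty if $root(t_{m+n})=T$, which is consistent). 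So the finite segment is a chain by restriction.

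I would also note the (routine) point that \emph{minimality} is preserved if the original chain was minimal: the minimality clause of Definition~\ref{def_chain} quantifies over positions $q > p_i'$ of $c_i'[t_i\sigma]_{p_i'}$, and after re-indexing these are a subset (up to the context stripping performed by the extraction) of the positions controlled in the original chain, so $\Pi$-termination in context is inherited. The main obstacle, such as it is, is purely notational: one must carefully re-establish the invariants $c_i' = c_{i-1}''[c_i]$ and $p_i' = p_{i-1}' p_i$ for the shifted indices and check that the extraction of the tail out of the surrounding context $c_{m-1}''[\cdot]_{p_{m-1}'}$ does not disturb the allowedness conditions — but this is exactly what Lemma~\ref{lem_extraction} delivers, since every step in the tail occurs at or below $p_{m-1}'$, so its position and the relevant forbidden-pattern matching are unaffected by deleting the context above $p_{m-1}'$. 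There is no genuine mathematical difficulty; the lemma is essentially a ``chains are closed under taking (infinite or finite) factors starting at any point'' observation, and the proof is bookkeeping plus one appeal to the extraction lemma.
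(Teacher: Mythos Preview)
Your proposal is correct and follows essentially the same approach as the paper's proof: both unfold Definition~\ref{def_chain} for the original chain, then invoke the extraction Lemma~\ref{lem_extraction} to strip off the outer context $c_{m-1}''[\cdot]_{p_{m-1}'}$ and verify that the resulting (infinite or truncated) tail again satisfies all clauses of the chain definition with suitably re-indexed contexts $\tilde{c}_i',\tilde{c}_i''$ and positions $\tilde{p}_i'$. Your write-up is in fact more explicit than the paper's about the re-initialisation of the bookkeeping and about the trailing empty $\mathcal{R}$-segment in the finite case; the only small imprecision is your claim that ``every step in the tail occurs at or below $p_{m-1}'$'' --- some $\mathcal{R}$-steps may be parallel to $p_{m-1}'$ and are simply discarded upon extraction, but this does not affect the argument.
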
 

\begin{proof}
We consider the original CDP sequence $\alpha_1, \alpha_2, \ldots$
and write $\alpha_1 = s_1 \rightarrow t_1 [c_1], \alpha_2 =
s_2 \rightarrow t_2 [c_2], \ldots$. Since this CDP sequence is an
FP-CDP chain we have
\begin{eqnarray*}
s_1 \sigma & \rightarrow_{\mathcal{P}} & c_1[t_1 \sigma]_{p_1} =
  c_1'[t_1 \sigma]_{p_1'} \\ 
\overset{\smash{\not \leq p_1'}}{\rightarrow}_{\mathcal{R}}^*
c_1''[s_2 \sigma]_{p_1'} & \rightarrow_{\mathcal{P}} & c_1''[c_2[t_2
  \sigma]_{p_2}]_{p_1'} = c_2'[t_2 \sigma]_{p_2'}\\ 
& \ldots & \\
\overset{\smash{\not \leq p_{m-1}'}}{\rightarrow}_{\mathcal{R}}^* 
c_{m-1}''[s_m \sigma]_{p_{m-1}'} & \rightarrow_{\mathcal{P}} & c_{m-1}''[c_m[t_m
\sigma]_{p_m}]_{p_{m-1}'} = c_m'[t_m \sigma]_{p_m'} \\
\overset{\smash{\not \leq p_m'}}{\rightarrow}_{\mathcal{R}}^* 
c_{m}''[s_{m+1} \sigma]_{p_{m}'} & \rightarrow_{\mathcal{P}} & c_{m}''[c_{m+1}[t_{m+1}
\sigma]_{p_{m+1}}]_{p_{m}'} = c_{m+1}'[t_{m+1} \sigma]_{p_{m+1}'} \\
& \ldots & \\
\overset{\smash{\not \leq p_{m+n-1}'}}{\rightarrow}_{\mathcal{R}}^* 
c_{m+n-1}''[s_{m+n} \sigma]_{p_{m+n-1}'} & \rightarrow_{\mathcal{P}} & c_{m+n-1}''[c_{m+n}[t_{m+n}
\sigma]_{p_{m+n}}]_{p_{m+n-1}'} \\
& \ldots & 
\end{eqnarray*} 
for some substitution $\sigma$ according to Definition \ref{def_chain}. However,
according to Lemma \ref{lem_extraction} we also have
\begin{eqnarray*}
s_m \sigma & \rightarrow_{\mathcal{P}} & c_m[t_m
\sigma]_{p_m} = \tilde{c}_m'[t_m \sigma]_{\tilde{p}_m'} \\
\overset{\smash{\not \leq \tilde{p}_m'}}{\rightarrow}_{\mathcal{R}}^* 
\tilde{c}_{m}''[s_{m+1} \sigma]_{\tilde{p}_{m}'} & \rightarrow_{\mathcal{P}} & \tilde{c}_{m}''[c_{m+1}[t_{m+1}
\sigma]_{p_{m+1}}]_{\tilde{p}_{m}'} = \tilde{c}_{m+1}'[t_{m+1} \sigma]_{\tilde{p}_{m+1}'} \\
& \ldots & \\
\overset{\smash{\not \leq \tilde{p}_{m+n-1}'}}{\rightarrow}_{\mathcal{R}}^* 
\tilde{c}_{m+n-1}''[s_{m+n} \sigma]_{\tilde{p}_{m+n-1}'} & \rightarrow_{\mathcal{P}} & \tilde{c}_{m+n-1}''[c_{m+n}[t_{m+n}
\sigma]_{p_{m+n}}]_{\tilde{p}_{m+n-1}'} \\
& \ldots &
\end{eqnarray*} 
where $\tilde{c}_i'$ (resp.\ $\tilde{c}_i''$) is obtained from $c_i'$ 
(resp.\ $c_i''$) through extraction, i.e.\ $\tilde{c}_i' = c_i'|_o$
(resp.\ $\tilde{c}_i'' = c_i''|_o$) for some position $o$ for all
$i \in \{m, \ldots, m+n, \ldots\}$.
Hence, $\alpha_{m}, \alpha_{m+1}, \ldots, \alpha_{m+n}$ resp.\
$\alpha_{m}, \alpha_{m+1}, \ldots$ are proper
FP-CDP chains.
\end{proof}

Using Lemma \ref{lem_subsequence_chain} we get that if no sequence of
CDPs of length $n$ involving a certain CDP $\alpha$ is a proper FP-CDP chain,
no infinite FP-CDP chain involves $\alpha$ and hence $\alpha$ can be soundly
deleted.
Thus, by additionally using Lemma \ref{lem_contexts} we can define
an effective CDP processor, the \emph{simple context processor}.

\begin{definition}[Simple context processor]
\label{def_scp}
Let $Prob = (\{s \rightarrow t [c[\Box]_p]\} \uplus \mathcal{P}, \mathcal{R}, \Pi, T)$ be a CDP problem. Given a
bound $n$ the simple context processor ($SCP_n$) returns 
\begin{itemize}
\item $\{(\mathcal{P}, \mathcal{R}, \Pi, T)\}$ if for every sequence
of CDPs 
\begin{equation*}
s \rightarrow t [c[\Box]_p], s_2 \rightarrow t_2 [c_2[\Box]_{p_2}], 
\ldots, s_{n} \rightarrow t_n [c_n[\Box]_{p_n}]
\end{equation*}
position $p.p_2.\cdots.p_n$ is forbidden in the term
$c[c_2[\ldots c_n[erase(t_n)]_{p_n}\ldots]_{p_2}]_p$ by a forbidden pattern
of $\Pi_{orth}$, and
\item $\{Prob\}$ otherwise.
\end{itemize}
\end{definition} 

\begin{theorem}
\label{thm_scp_sound}
The CDP processor $SCP_n$ is sound and complete for every $n > 1$.
\end{theorem}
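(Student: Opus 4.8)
The plan is to prove soundness and completeness of $SCP_n$ separately, where soundness means that if $SCP_n(Prob)$ contains only finite CDP problems then $Prob$ is finite, and completeness means the converse (if $Prob$ is finite, so are all problems in $SCP_n(Prob)$). Completeness is the easy direction: in the case where $SCP_n$ returns $\{Prob\}$ there is nothing to show, and in the case where it returns $\{(\mathcal{P}, \mathcal{R}, \Pi, T)\}$ we only need that $\mathcal{P} \subseteq \{s \rightarrow t [c]\} \uplus \mathcal{P}$, so every (minimal) CDP chain over the smaller problem is trivially also one over $Prob$; hence finiteness of $Prob$ implies finiteness of the returned problem.

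For soundness, the interesting case is when $SCP_n$ returns $\{(\mathcal{P}, \mathcal{R}, \Pi, T)\}$, i.e.\ every length-$n$ sequence of CDPs starting with $\alpha := s \rightarrow t\,[c[\Box]_p]$ has its nested box position forbidden by a pattern from $\Pi_{orth}$ in the canonical term $c[c_2[\ldots c_n[erase(t_n)]\ldots]]$. I would argue by contraposition: suppose $Prob$ is not finite, so there is an infinite minimal FP-CDP chain $\alpha_1, \alpha_2, \ldots$ over $Prob$. If $\alpha$ occurs only finitely often, then by Lemma~\ref{lem_subsequence_chain} some tail $\alpha_m, \alpha_{m+1}, \ldots$ avoids $\alpha$ entirely and is an infinite FP-CDP chain over $(\mathcal{P}, \mathcal{R}, \Pi, T)$, so that problem is not finite. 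If $\alpha$ occurs infinitely often, pick an occurrence, say $\alpha_m = \alpha$, and consider the length-$n$ subsequence $\alpha_m, \alpha_{m+1}, \ldots, \alpha_{m+n-1}$; by Lemma~\ref{lem_subsequence_chain} this is an FP-CDP chain. The key point is then to derive a contradiction with the defining condition of $SCP_n$: the hypothesis says the box position of the nested context of this particular sequence is forbidden by some $\pi \in \Pi_{orth}$ in the \emph{canonical} term, and we must show that this forbids the corresponding $\mathcal{P}$-step in the actual chain. Here Lemma~\ref{lem_contexts} does the work: along a real FP-CDP chain the nested context $c_1'[c_2'[\ldots]]$ differs from the canonical $c_1[c_2[\ldots[erase(t_n)]]]$ only by (i) reductions parallel to the respective holes inside each $c_i$ and (ii) a proper $\mathcal{R}$-reduction turning $erase(t_n)$ into the relevant subterm, and by Lemma~\ref{lem_contexts} a $\Pi_{orth}$-pattern that forbids the hole position in the canonical term still forbids it after all these modifications. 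Since Definition~\ref{def_chain} requires every step in the chain to be allowed in the $erase$-image, and the $\mathcal{P}$-step corresponding to $\alpha_{m+n-1}$ (or rather the next step after the nested context is assembled) takes place at the now-forbidden position, we get a contradiction, so this branch cannot occur.

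The main obstacle I expect is the bookkeeping that reconciles the ``static'' nested context $c[c_2[\ldots c_n[erase(t_n)]\ldots]]$ used in Definition~\ref{def_scp} with the ``dynamic'' nested contexts $c_i', c_i''$ that appear in Definition~\ref{def_chain}: one must carefully check that the reductions relating them are exactly of the parallel/below-the-hole form required by the hypothesis of Lemma~\ref{lem_contexts} (this is where the clause $c_i \rightarrow_{\mathcal{R}}^* c_i'$ with reductions parallel to $p_i$, and $erase(t_n) \overset{>\epsilon}{\rightarrow}_{\mathcal{R}} t_n'$, must be matched up with the $\overset{\not\leq p_i'}{\rightarrow}_{\mathcal{R}}^*$ steps of the chain, using that the $\mathcal{P}$-step symbols/tokens only sit at the relevant hole positions). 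A secondary subtlety is the role of $erase$: since the $\Pi_{orth}$ condition is stated on $erase$-images while forbidden-pattern reductions in a chain are also checked on $erase$-images, one should observe that $erase$ commutes appropriately with the relevant $\mathcal{R}$-reductions and with filling holes, so that forbiddenness transfers. Once these alignments are in place, the contradiction is immediate and the theorem follows; the requirement $n > 1$ is needed simply so that a sequence of CDPs long enough to exhibit a nontrivial nested context (at least one ``next'' step after $\alpha$) is actually considered.
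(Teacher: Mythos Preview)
Your proposal is correct and follows the paper's approach closely: completeness by monotonicity of chains in $\mathcal{P}$, and soundness via Lemma~\ref{lem_subsequence_chain} (both for the finitely-many-occurrences case and for passing to a finite subsequence of the infinite chain) together with Lemma~\ref{lem_contexts} (to transfer the $\Pi_{orth}$-forbiddenness from the canonical nested context to the actual term arising in the chain). The only imprecision is an off-by-one: the contradiction lives with the length-$(n{+}1)$ subsequence rather than the length-$n$ one you initially set up---as your own parenthetical already acknowledges---since it is the $\mathcal{P}$-step for $\alpha_{m+n}$ (at the hole position $p.p_{m+1}\cdots p_{m+n-1}$) that is blocked; the paper simply takes the subsequence of length $n{+}1$ from the start, shows it cannot be a chain, and then invokes Lemma~\ref{lem_subsequence_chain} contrapositively to conclude that $S$ is not a chain either.
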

\begin{proof}
Completeness of the processor is trivial since either one CDP is deleted
or the problem is returned unmodified. In either case infinity of the
returned problem implies infinity of the original one.

Regarding soundness assume towards a contradiction that $Prob$ is infinite
while $SCP_n(Prob)$ is finite (i.e.\ the single problem contained in
the set of returned problems). If $SCP_n(Prob) = \{Prob\}$ soundness is
trivial. Otherwise, let $Prob = (\{s \rightarrow t [c[\Box]_p]\} \uplus \mathcal{P}, \mathcal{R}, \Pi, T)$
and $SCP_n(Prob) = \{(\mathcal{P}, \mathcal{R}, \Pi, T)\}$, i.e.\ the
CDP $s \rightarrow t [c[\Box]_p]$ has been deleted by the processor.
Since $Prob$ is infinite there exists an infinite FP-CDP chain
$S\colon \alpha_1, \alpha_2, \ldots$ with $\alpha_i \in \{s \rightarrow t [c[\Box]_p]\} \uplus \mathcal{P}$
for all $i \geq 1$. Moreover, $s \rightarrow t [c[\Box]_p]$ occurs infinitely
often in $S$, since otherwise there would exist an infinite FP-CDP
chain without $s \rightarrow t [c[\Box]_p]$ starting after the last
occurrence of $s \rightarrow t [c[\Box]_p]$ 
in $S$ (using Lemma \ref{lem_subsequence_chain}), thus contradicting
finiteness of $(\mathcal{P}, \mathcal{R}, \Pi, T)$.

Now consider a subsequence of length $n+1$ of $S$ starting at an occurrence
of $s \rightarrow t [c[\Box]_p]$, i.e.\ $\alpha_m, \alpha_{m+1}, \ldots, \alpha_{m+n+1}$.
According to Definition \ref{def_scp}, since $s \rightarrow t [c[\Box]_p]$ has
been deleted, the position $p.p_{m+1}.\cdots.p_{m+n+1}$ is forbidden in the term
$c[c_{m+1}[\ldots c_{m+n}[erase(t_{m+n})]_{p_{m+n}}\ldots]_{p_{m+1}}]_p$ 
by a forbidden pattern of $\Pi_{orth}$
where
$c_i[\Box]_{p_i}$ is the context associated to the CDP $\alpha_i$ for all $i \geq 1$ and
$t_{m+n}$ is the right-hand side of the CDP $\alpha_{m+n}$.  

Using Lemma \ref{lem_contexts} we obtain that the same position is $\Pi_{orth}$-forbidden in every
term obtained from $c[c_{m+1}[\ldots c_{m+n}[erase(t_{m+n})]_{p_{m+n}}\ldots]_{p_{m+1}}]_p$
by reduction parallel to or below $p.p_{m+1}.\cdots.p_{m+n+1}$. Thus, 
there cannot be a subsequent CDP step at this position and hence
$\alpha_m, \alpha_{m+1}, \ldots, \alpha_{m+n+1}$ is not a proper FP-CDP chain.
However, by Lemma \ref{lem_subsequence_chain} this implies that $S$ is not a proper
FP-CDP chain and we get a contradiction.
\end{proof}

\begin{example}
Consider the CDP problem of Example \ref{ex_analyzing_contexts} and
let $n = 3$. The only candidate CDP sequence of length $3$ is
$\alpha, \alpha, \alpha$ where $\alpha = a^{\#} \rightarrow a^{\#} [f(\Box)]$.
The nested context corresponding to this CDP sequence is
$f(f(f(\Box)))$, the relevant position is $1.1.1$ and $f(f(f(erase(a^{\#})))$ is $f(f(f(a)))$.
In this example $\Pi_{orth} = \Pi$ and thus we observe that position
$1.1.1$ is forbidden in the term $f(f(f(a)))$. According to Theorem
\ref{thm_scp_sound} it is sound to delete $\alpha$, thus leaving us with
an empty set of CDPs. Hence, we conclude finiteness of the original CDP problem.
\end{example}

Definition \ref{def_scp} requires to consider all sequences of CDPs of a given
length $n$ as CDP chain candidates. However, in practice it is not desirable to 
consider all $n$-tuples of CDPs since the number of these tuples combinatorially
explodes. To counter this problem the sequences of CDPs that need to be considered
can be obtained from existing DP graph approximations (cf.\ e.g.\ \cite{jar06,frocos05,cade09}). 
By the definition of the dependency graph, every DP-chain corresponds to
a path in this graph and also every FP-CDP chain corresponds to a
path in the DP graph and thus also in every (over-)approximation of this
graph.

\begin{example}
Consider the TRS $\mathcal{R}$ from Example \ref{ex_2nd} and one forbidden
pattern $\Pi = \{\langle x : (y : zs), \epsilon, b \rangle \}$. We have
$CDP(\mathcal{R}) = $
\begin{equation} \nonumber	
  \begin{tabular}[b]{r@{ $\rightarrow$ }l@{\;\;\;\;\;\;}r@{ $\rightarrow$ }l}
    $\{\alpha_1: inf^{\#}(x)$ & $inf^{\#}(s(x)) \; [x : \Box]$ & $\alpha_2: inf^{\#}$ & $T(x) \; [x : inf(s(\Box))]$ \\
    $\alpha_3: inf^{\#}$ & $T(x) \; [\Box : inf(s(x))]$ & $\alpha_4: 2nd^{\#}(x : y : zs)$ & $T(y) \; [\Box]$ \\
    $\alpha_5: T(inf(x))$ & $T(x) \; [inf(\Box)]$ & $\alpha_6: T(inf(x))$ & $inf^{\#}(x) \; [\Box]\}$
  \end{tabular}
\end{equation}
Now we apply the simple context processor to the CDP problem
$(CDP(\mathcal{R}), \mathcal{R}, \Pi, T)$ with a bound of $n = 3$ and
considering the CDP $\alpha_1$. By computing some DP graph approximation
one observes that all DP chain candidates of length $3$ starting with
the CDP $\alpha_1$ are the following.
\begin{eqnarray*}
\alpha_1, \alpha_2, \alpha_5 & \mbox{ with corresponding context: } & x : (x' : inf(s(inf(\Box)))) \\
\alpha_1, \alpha_2, \alpha_6 & \mbox{ with corresponding context: } & x : (x' : inf(s(\Box))) \\
\alpha_1, \alpha_3, \alpha_5 & \mbox{ with corresponding context: } & x : (inf(\Box) : inf(s(x'))) \\
\alpha_1, \alpha_3, \alpha_6 & \mbox{ with corresponding context: } & x : (\Box : inf(s(x'))) \\
\alpha_1, \alpha_1, \alpha_2 & \mbox{ with corresponding context: } & x : (x' : (x'' : inf(s(\Box)))) \\
\alpha_1, \alpha_1, \alpha_3 & \mbox{ with corresponding context: } & x : (x' : (\Box : inf(s(x'')))) \\
\alpha_1, \alpha_1, \alpha_1 & \mbox{ with corresponding context: } & x : (x' : x'' : \Box) 
\end{eqnarray*}
Note that CDPs in chain candidates are assumed to be variable
disjoint, so the reoccurring variables have been renamed in the example.
It is easy to see that the box position is forbidden in all above
contexts, hence 
this position is also forbidden when $\Box$ is substituted by any term
$erase(t)$ because $\Box$ does not occur in any forbidden pattern. Hence, none of the
CDP chain candidates is a proper FP-CDP chain and thus according to Theorem \ref{thm_scp_sound}
it is sound to delete $\alpha_1$.
\end{example}

\section{Automated Synthesis of Forbidden Patterns}
\label{synth} 
 
In this section we are going to utilize the machinery of Sections
\ref{cdp} and \ref{scp},  
and in particular the simple context processor $SCP_n$,
in order to synthesize suitable forbidden patterns for
a given rewrite system $\mathcal{R}$. The basic idea is to
construct the CDPs of $\mathcal{R}$ assuming an empty set of
forbidden patterns $\Pi$ and then by an analysis with the $SCP_n$
processor synthesize the forbidden patterns needed to ensure $\Pi$-termination
of $\mathcal{R}$ on the fly.

More precisely, we analyze nested contexts obtained by sequences
of CDPs of bounded length (as in Definition \ref{def_scp}).
Let $c_1[\ldots[c_n[erase(t_n)]_{p_n}\ldots]_{p_1}$ be a term obtained by
this nested context analysis. In order to successfully apply 
the $SCP_n$ processor, position $p_1.\cdots.p_n$ must be forbidden
in this term. Hence, we synthesize a forbidden pattern 
$\langle c_1[\ldots[c_n[erase(t_n)]_{p_n}\ldots]_{p_1}, p_1.\cdots.p_n, h\rangle$,
that forbids exactly this position. By doing this for every sequence of
CDPs of length $n$ starting with the CDP corresponding to the context $c_1$, 
this CDP can be soundly deleted according to Theorem \ref{thm_scp_sound} provided
that the generated forbidden patterns are in $\Pi_{orth}$. However, 
forbidden patterns obtained this way might not be orthogonal to
the rewrite system and thus not be in $\Pi_{orth}$. In order to
overcome this problem, terms in the first component of synthesized 
forbidden patterns can be ``generalized'', i.e.\ linearized and
subterms at positions where overlaps with the rule system occur can
be replaced by fresh variables. By doing this the rewrite relation becomes
more restrictive (since the patterns match object terms more
easily). Moreover, since the patterns after this generalization
are orthogonal to $\mathcal{R}$, the $SCP_n$ processor is applicable
on the fly for simplifying 
the termination problems.

We provide an algorithmic schema for the forbidden pattern synthesis:
\begin{enumerate}
\item Compute $CDP(\mathcal{R})$ assuming an empty $\Pi$.
\item \label{start} Choose some CDP $s_1 \rightarrow t_1 \; [c_1]$.
\item For all CDP sequences 
\begin{equation*}
s_1 \rightarrow t_1 [c_1[\Box]_{p_1}], s_2 \rightarrow t_2 [c_2[\Box]_{p_2}], 
\ldots, s_{n} \rightarrow t_n [c_n[\Box]_{p_n}]
\end{equation*}
\begin{enumerate}
\item If position $p_1.\cdots.p_n$ is allowed in $c_1[\ldots[c_n[erase(t_n)]_{p_n}\ldots]_{p_1}$, 
\begin{enumerate}
\item create a forbidden pattern $\langle c_1[\ldots[c_n[erase(t_n)]_{p_n}\ldots]_{p_1}, p_1.\cdots.p_n, h\rangle$.
\item Generalize $\langle
  c_1[\ldots[c_n[erase(t_n)]_{p_n}\ldots]_{p_1}, p_1.\cdots.p_n,
  h\rangle$ so that it is orthogonal to $\mathcal{R}$, obtaining
  $\langle u, o, \lambda \rangle$. 
\item Add $\langle u, o, \lambda \rangle$ to $\Pi$.
\end{enumerate}
\end{enumerate}
\item \label{delete} Delete the CDP $s_1 \rightarrow t_1 \; [c_1]$ and continue the $\Pi$-termination analysis (e.g.\ at Stage \ref{start}).
\end{enumerate}
 
\begin{example}
Consider a CDP problem $(\mathcal{P}, \mathcal{R}, \Pi, T)$ where
$\mathcal{P} = \{a^{\#} \rightarrow a^{\#} [f(\Box)]\}$, 
$\mathcal{R} = \{a \rightarrow f(a)\}$ and 
$\Pi = \emptyset$
(cf.\ also Example \ref{ex_analyzing_contexts}).
An $SCP_2$ processor
encounters e.g.\ the term $f(f(erase(a^{\#})) = f(f(a))$. Thus, a
forbidden pattern $\pi = \langle f(f(a)), 1.1, h \rangle$
could be used. This forbidden pattern is orthogonal
to $\mathcal{R}$, hence there is no need to generalize it.
Indeed, when this forbidden pattern is used, there is
no infinite FP-CDP chain.
\end{example}

Usually one wants to restrict the shape of the generated
patterns for instance by demanding that all forbidden
patterns contain allowed redexes and do not overlap (each other), 
in order to ensure that 
$\Pi$-normal forms are normal forms (w.r.t.\ $\mathcal{R}$);
then termination of $\rightarrow_{\Pi}$ implies weak termination 
of $\rightarrow_{\mathcal{R}}$.

A second choice for restrictions on the shape of forbidden
patterns might be 
\emph{canonical} forbidden patterns as defined 
in \cite{wrs09}[Definition 4]. 

Synthesis of forbidden patterns adhering to these syntactical
restrictions can be 
done
analogously to the way
patterns orthogonal to $\mathcal{R}$ are synthesized. Namely, 
by generalizing the forbidden patterns to make them compatible
with syntactical constraints immediately after their 
creation.

\begin{example}
\label{ex_fp_creat}
Consider the CDP problem of $\mathcal{R}$ of Example \ref{ex_2nd}and the contextual
dependency pair 
\begin{equation*}
inf^{\#}(x) \rightarrow inf^{\#}(s(x)) [x : \Box]
\end{equation*}
Applying an $SCP_2$ processor we get a term $x : x' : inf(s(x'))$, which 
needs to be generalized since it is not linear
and thus not canonical (because not simple) and not in $\Pi_{orth}$ 
(hence a termination proof with
the context processor would not be possible). Instead we
linearize the term obtaining $x : y : inf(s(z))$ which we
can use as canonical forbidden pattern. Indeed,  
$\mathcal{R}$ is $\Pi$-terminating when choosing $\Pi = \langle x : y
: inf(s(z)), 2.2, h \rangle$.  
\end{example}

As an alternative to the on-the-fly generation of forbidden
patterns during the termination analysis with $SCP_n$ processors, 
in some cases an (iterated) two phase process might be more efficient.
There, Stage \ref{delete} of the above algorithm scheme is
not carried out, i.e.\ no CDPs are deleted after the generation
of forbidden patterns. Instead, in phase 2, the termination analysis starts
from scratch using the generated forbidden patterns. If it fails
new forbidden patterns are generated and termination is analyzed again
afterwards. This sequence of (separated) generation of forbidden patterns
and termination analysis continues until termination is proved.

While at first glance the two phase approach seems to be less
efficient than the on-the-fly generation of forbidden patterns
during the termination analysis, it has an important advantage.
In the phase of the generation of forbidden patterns an arbitrary
subset of CDPs can be used for the synthesis of forbidden patterns.
Since termination is proved separately, this does not affect the
soundness of the approach. 
The concrete advantages of this approach are the following.
\begin{itemize}
\item For the termination analysis one is not restricted to the CDP
framework. One can for instance use the transformation
of \cite{wrs09}.
\item When disregarding structural dependency pairs during the
synthesis of forbidden patterns, the generated patterns are more
intuitive, simpler and often suffice to obtain termination.
\item When using the CDP framework, the generated (stable) forbidden patterns
can be used to compute the concrete set of CDPs.
\item The generation of forbidden patterns is more fine-grained, since
not all sequences of CDPs of a given length are considered in the $SCP_n$
processor, but only those contained in the specified subset of CDPs (which could for
instance be specified by a human in a semi-automatic synthesis process). This results
in fewer created forbidden patterns that might still be sufficient to
yield $\Pi$-termination. 
\end{itemize}

\begin{example}
In Example \ref{ex_fp_creat} exactly the only non-structural dependency
pair is used. Using the two phase approach the according forbidden pattern
is found fully automatically. 
\end{example}

\section{Implementation and Evaluation}

We implemented the CDP framework and the context processors in the
termination tool VMTL (cf.\ \cite{rta_vmtl}). 
In order to evaluate the practical power of this approach we tested this
implementation on the TRSs of the outermost category of the TPDB%
\footnote{
The termination problem database, available at \texttt{http://termcomp.uibk.ac.at/}
}.
Since outermost rewriting is a special case of rewriting with forbidden
patterns (in particular rewriting with forbidden $b$-patterns), the
CDP approach is applicable to these systems. In our test run 291 TRS were
evaluated, 158 of which were proven to be outermost non-terminating in the
termination competition 2008 (\cite{termcomp}). Table \ref{tab_outermost_benchmarks}
shows the results of VMTL on the test set. At the time of writing, VMTL does not
support non-termination analysis of outermost TRSs. Hence, Table \ref{tab_outermost_benchmarks}
indicates only the positive results of VMTL and various other termination tools
tested on the same set of examples. We cite the results of the termination competition
2008 since the then most powerful tool (regarding positive termination proofs) Jambox
did not participate in the subsequent years.
The participating tools were
\begin{itemize}
\item AProVE (\cite{Aprove}), which proves outermost termination by transforming TRSs 
such that (innermost) termination implies outermost termination of the original TRS. 
The transformations used are the ones from Raffelsieper et.\ al.\ (\cite{raffelsieper})
and Thiemann (\cite{thiemann}).
\item TrafO (\cite{raffelsieper}), which proves outermost termination
by 
using the transformation of Raffelsieper et.\ al.\ (\cite{raffelsieper})
and analyzing the resulting TRSs with Jambox (\cite{jambox}).
\item ``Jambox goes out'', which transforms
TRSs into a context-sen\-si\-tive ones, such that termination of the
latter implies
outermost termination of the former (cf.\ \cite{rta_outermost}).    
\end{itemize}
TTT2 (\cite{ttt2}) participated in the outermost category of the termination competition
2008 but was specialized (exclusively) on disproving outermost termination. Hence, 
it is not included in Table \ref{tab_outermost_benchmarks}.

\begin{table}
\begin{equation*}
\begin{tabular}{|c|c|c|c|}
\hline
\textbf{VMTL Simple} & \textbf{AProVE} & \textbf{TrafO} & \textbf{Jambox goes out} \\
\hline
$33$ & $27$ & $46$ & $72$ \\ \hline
\end{tabular}
\end{equation*}
\caption{Number of successful outermost termination proofs of various systems.}
\label{tab_outermost_benchmarks}
\end{table}

VMTL used the simple context processor $SCP_n$ with $n = 3$
for the analysis of contexts and reduction pair processors based
on polynomial interpretations as well as a dependency graph processor. 

We would like to stress at this point that the performance of VMTL
vastly improves when using more clever ways of context-analysis like
the one described in \cite{wst10}. Using the \emph{context processor
based on tree automata} introduced there outermost termination of $60$
examples can be automatically verified by VMTL. However, even then
the transformation approach of ``Jambox goes out'' has the edge over
VMTL in proving outermost termination.
We believe that the reasons for this are twofold.

First, the use of structural dependency pairs adds significant complexity
to the initial CDP problems. In particular for some outermost terminating TRSs
where VMTL failed to find an outermost termination proof, we observed
that the simplified CDP problems obtained at the end of failed proof attempts
consisted of structural CDPs only (in their first
component). 
The second reason for the lack of power of VMTL compared to Jambox is the
use of $\Pi_{orth}$ in the context processors. By excluding certain
forbidden patterns in the context analysis performed by our context processors
the power is reduced.

Addressing both of these problems seems like an interesting and promising
way to improve the CDP framework and make it even more competitive in the
future.

\section{Conclusion and Related Work}

We introduced a modified version of the dependency pair framework where
dependency pairs are enriched by an additional component that is best
understood as the calling context of the recursive function
the dependency pair originated from. This contextual dependency pair
framework enables us to reason about termination of rewriting
incorporating many forms of context-dependency resp.\ context-sensitivity.

In that sense the context-sensitive dependency pair framework of \cite{lpar08}
might be seen as a specialized and optimized version of the CDP framework
where the contextual information is incorporated into the dependency
pairs directly without explicitly having these contexts attached to the
dependency pairs. However, in the case of context-sensitive rewriting 
this was possible mainly because of the simplicity and stability
of context-sensitive restrictions (note that if context-sensitivity
is expressed by forbidden pattern restrictions as in \cite{wrs09}, all
resulting forbidden patterns are stable and $\Pi = \Pi_{orth}$). 
In the presence of more sophisticated
context restrictions the CDP framework 
appears to
be advantageous and more general 
because of the explicit reference to the strategic restrictions in the notion
of chains.

In the case of context-sensitive rewriting there are also other even more
general formulations of dependency pairs and context-sensitive dependency
pair frameworks (cf.\ \cite{rauldiss,wrla10guti}). There, dependency pairs are
allowed to be 
\emph{collapsing}. 
Thus, the use of structural dependency
pairs can be avoided. Since structural dependency pairs are a major
source of complication and practical limitation in our contextual dependency
pair framework, it might be a promising direction of future research
to use collapsing contextual dependency pairs as well.  

In order to prove termination within the CDP framework we introduced
the simple context processor $SCP_n$. This processor analyzes sequences
of CDPs of bounded length for being proper FP-CDP chains and erases
a CDP if all chain candidates starting with this CDP cannot be proper
FP-CDP chains. 
Together with the CDP framework this processor yields an effective way
of proving termination of rewriting restricted with forbidden patterns.
Moreover, based on this processor we introduced a method to synthesize
forbidden patterns suitable for a given rewrite system on-the-fly
during the termination analysis. 

Regarding future work, we see several attractive directions. First, 
the power of termination analysis could be significantly
increased by using more clever methods of analyzing the (nested)
contexts of CDP chain candidates or more efficient ways to
represent CDP chain candidates. Some work has already been done in
this direction as reported in \cite{wst10}. There the nested contexts
of all possible sequences of CDPs are expressed finitely through the language
accepted by a certain tree automaton. Then it is checked whether in
every context of this language the hole position is forbidden by the
forbidden pattern restrictions, and if yes the CDP problem is simplified
accordingly.

Another direction of future research is finding larger subsets
of forbidden patterns for which Lemma \ref{lem_contexts} holds thus
enabling the use of larger subsets of forbidden patterns
in the context processors. 

Regarding the automated synthesis of forbidden patterns, building
upon the approach of Section \ref{synth} one of the challenges is
to generate small and intuitive sets of forbidden patterns. The two
phase approach described in Section \ref{synth} is already a first step
in this direction. Apart from that it might be interesting to use
more sophisticated methods of context analysis, such as the one based
on tree automata, for the generation of suitable forbidden patterns. 

\bibliographystyle{eptcs}

\end{document}